\newtheorem{theorem}{Theorem}
\newtheorem{lemma}[theorem]{Lemma}
\theoremstyle{plain}
\newtheorem{thm}{Theorem}
\newtheorem*{theorem*}{Theorem}
\newtheorem{defin}{Definition}
\newtheorem{assump}{Assumption}
\newcommand{\mcal}{\mathcal}
\newcommand{\mb}{\mathbf}
\begin{document}

\title[Differentially Private Federated Learning without Noise Addition: When is it Possible?]{Differentially Private Federated Learning\\
without Noise Addition: When is it Possible?}

\author{Jiang Zhang, Konstantinos Psounis, Salman Avestimehr}

\affiliation{
  \institution{ University of Southern California  }
  \city{  Los Angeles  }
  \country{  United States  }}
\email{   }

\renewcommand{\shortauthors}{Zhang et al.}

\begin{abstract}
Federated Learning (FL) with Secure Aggregation (SA) has gained significant attention as a privacy preserving framework for training machine learning models 
while preventing the server from learning information about users' data from their individual encrypted model updates.
Recent research has extended privacy guarantees of FL with SA by bounding the information leakage through the aggregate model over multiple training rounds thanks to leveraging the ``noise" from other users' updates. However, the privacy metric used in that work (mutual information) measures the on-average privacy leakage, without providing any privacy guarantees for worse-case scenarios. 
To address this, in this work we study the conditions under which FL with SA can provide worst-case differential privacy guarantees. Specifically, we formally identify the necessary condition that SA can provide DP without addition noise. We then prove that when the randomness inside the aggregated model update is Gaussian with non-singular covariance matrix, SA can provide differential privacy guarantees with the level of privacy $\epsilon$ bounded by the reciprocal of the minimum eigenvalue of the covariance matrix. However, we further demonstrate that in practice, these conditions are almost unlikely to hold and hence additional noise added in model updates is still required in order for SA in FL to achieve DP. Lastly, we discuss the potential solution of leveraging inherent randomness inside aggregated model update to reduce the amount of addition noise required for DP guarantee.

%


\end{abstract}

\keywords{}

\maketitle

\section{Introduction}
Federated learning (FL) has garnered considerable attention in recent years  due to its ability for facilitating collaborative training of machine learning models using locally private data from multiple users, eliminating the need for users to share their private local data with a central server~\cite{cc, kairouz2019advances,FedAvg}. A standard FL system involves a central server that oversees the training of a global model, which is regularly updated locally by the users over multiple rounds. In each round, the server initially distributes the current global model to the users. Subsequently, the users enhance the global model by training it on their respective private datasets and then transmit their local model updates back to the server. The server then modifies the global model by aggregating the received local model updates from the users and so on.

Although communicating the local model updates avoids sharing of data directly, it has been shown that the updates can be reverse-engineered to leak information about a user's dataset~\cite{NEURIPS2019_60a6c400,geiping2020inverting,yin2021gradients}.

To prevent information leakage from individual models, Secure Aggregation (SA) protocols have been employed to enable the server to aggregate local model updates from multiple users without having access to any clear model updates.
In each training round, users encrypt their local updates before sending them to the server for aggregation, such that the server can only learn the aggregated model, thereby preserving the privacy of individual updates. Recently, the work in \cite{elkordy2022much} extended the privacy guarantees for secure aggregation to information-theoretic guarantees on the leakage through the aggregate model over multiple training rounds. However, whether noise induced through vanilla secure aggregation (considering other participating users as noise) can provide worst-case differential privacy (DP)~\cite{dwork2014algorithmic} guarantees without any novel randomness has remained an open problem as highlighted in~\cite{elkordy2022much}.

In this paper, we target an answer to this question by focusing on formal differential privacy (DP) for secure aggregation (see Fig.~\ref{fig:fl_fig}). Specifically, we provide theoretical answers to the following: 
\begin{align*}
\parbox{3in}{\it 
\begin{enumerate}
    \item Under what conditions can secure aggregation provide DP guarantee?
    \vspace{0.2em}
    \item If it can, how much differential privacy can it offer?
    \item If not, is it possible to leverage the inherent randomness inside aggregated model update to reduce the amount of additional noise required for DP?
\end{enumerate}
}
\end{align*}

\begin{figure} 
\centering
\includegraphics[width=0.9\linewidth]{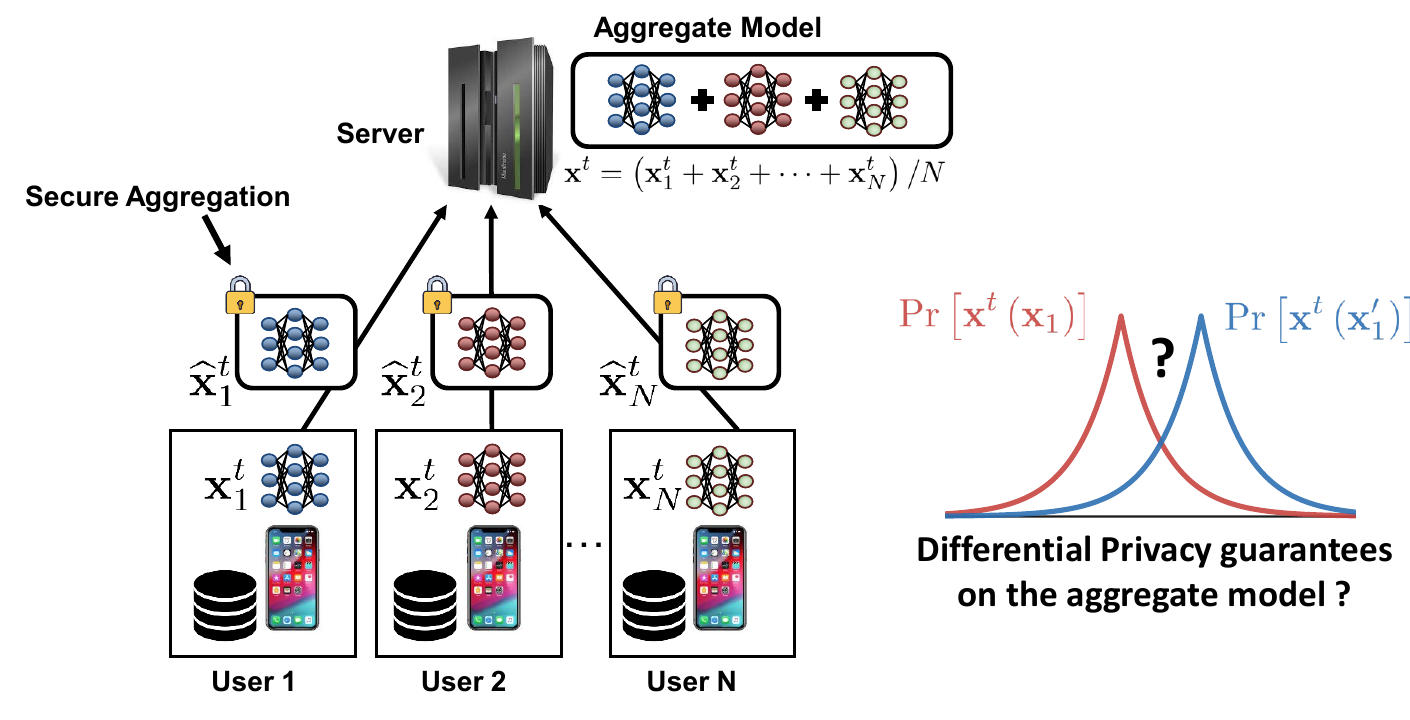}
\vspace{-.1in}
\caption{Federated learning with SA and DP guarantees.}
\vspace{-.1in}
\label{fig:fl_fig}
\end{figure}

\noindent {\bf Contributions.}
We demonstrate that one necessary condition for the aggregated model update to provide DP guarantees for individual user's local dataset is that the space of each individual user’s model update is included in the space of the aggregated model update from all users, see Sections \ref{sec:whatweneedforDP} and Section \ref{sec:theory} for detailed explanation and formal proofs. 
We further prove that under the special case where the randomness inside the aggregated model update is Gaussian with non-singular covariance matrix, the aggregated model update can provide DP guarantees for individual user's local dataset without noise additional, where the DP $\epsilon$ is bounded by the reciprocal of the minimal eigenvalue of the covariance matrix of the aggregated model update.

Moreover, we demonstrate that in practice, the conditions for SA to provide DP guarantee is almost unlikely to hold, especially in deep learning implementations where models are over-parameterized. Therefore, additional DP noise is required in order for SA to provide DP guarantee for individual user's dataset. Lastly, we investigate the possibility of leveraging the inherent randomness inside aggregated model update to reduce the amount of additional noise needed for the same level of DP.

\section{Preliminaries}
We start off by presenting the standard federated learning framework before introducing secure aggregation and its privacy guarantees, and concluding with key definitions for differential privacy which are of interest to this work.
\subsection{Federated learning}
In federated learning (see Fig.~\ref{fig:fl_fig}), a set of $N$ users collaborate with a central server to learn a machine learning model parameterized by $\theta \in \mathbb{R}^d$. Each user $i$ has its own local dataset $\mcal{D}_i$.
For simplicity, we assume that the datasets $\{\mcal{D}_i\}_{i=1}^N$ are equal in size.
The optimization problem to collaboratively solve can be formulated as:
\begin{align}
    \theta^\star = {\rm arg}\min_{\theta}\ L(\theta) = \frac{1}{N}\sum_{i=1}^N L_i(\theta).
\end{align}
To solve this problem in a distributed manner, the server and users follow an iterative protocol to train the model parameter $\theta$. In particular, at the $t$-th iteration, the server sends the latest global model $\theta^{(t)}$ and each user $i$ computes an update $x_i^{(t)}$ for that model, which is then aggregated by the server to create the new global model as follows:
\begin{align}
    \theta^{(t+1)} = \theta^{(t)} + \frac{1}{N}\sum_{i=1}^N x_i^{(t)}.
\end{align}
Two common FL protocols for computing the update are FedSGD and FedAvg \cite{mcmahan2017communication}. In the FedSGD protocol, each user samples a set of $\mcal{B}^{(t)}$ of $B$ local samples and computes the update as follows:
\begin{align}
    x_i^{(t)} = - \frac{\eta^{(t)}}{B} \sum_{b \in \mcal{B}^{(t)}} g_i(\theta^{(t)}, b),
\end{align}
where $g_i(\theta^{(t)}, b)$ is an estimate of gradient $\nabla L_i(\theta^{(t)})$, which is calculated by sampling a sample $b$ uniformly at random without replacement from $\mcal{D}_i$. In FedAvg, each user will use mini-batches in $\mcal{D}_i$ to run multiple SGD steps and then send the model update to the server.

\subsection{Secure aggregation guarantees}\label{subsection:sa_guarantee}

Secure model aggregation \cite{aono2017privacy, truex2019hybrid,dong2020eastfly,xu2019hybridalpha,secagg_bell2020secure,9712310,secagg_so2021securing,secagg_kadhe2020fastsecagg,zhao2021information,so2021lightsecagg,9712310,mugunthan2019smpai,so2021turbo} is used in conjunction with FL for providing a privacy preserving framework without sacrificing training performance.  
 SA protocols are  broadly classified into two main categories based on the encryption technique;  homomorphic encryption \cite{aono2017privacy, truex2019hybrid,dong2020eastfly,xu2019hybridalpha}, and  secure multi-party computing (MPC) \cite{secagg_bell2020secure,secagg_so2021securing,secagg_kadhe2020fastsecagg,zhao2021information,so2021lightsecagg,9712310,mugunthan2019smpai,so2021turbo}. In SA,  the encrypted model update sent by each user $i$,    $\mathbf{y}^{(t)}_i = \text{Enc}(\mathbf{x}^{(t)}_i)$ is designed to achieve two conditions: 1) Correct decoding of the aggregated model under users' dropout; 2) Privacy for the individual local model update of the users from the encrypted model.  
In the  following, we expound upon each of these conditions in a formal manner.

\noindent \textbf{Correct decoding.} The encryption mechanism ensures accurate decoding of the aggregated model of the surviving users $\mathcal{V} \subset \mathcal{N}$, even in the event of user dropout during the protocol's execution.
 In other words, the server should be able to decode 
\begin{equation}
    \text{Dec} \left(\sum_{i \in \mathcal{V}} \mathbf{y}^{(t)}_i \right)= \sum_{i \in \mathcal{V}} \mathbf{x}^{(t)}_i.
\end{equation}

\noindent \textbf{Privacy guarantee.} 
The encryption should guarantee that under the collusion of a subset of users $\mathcal{T}$     with the server,  the  encrypted model updates $\{\mathbf{y}^{(t)}_i\}
   _{i \in[N]}$ leak no information  about the model updates $\{\mathbf{x}^{(t)}_i\}
   _{i \in[N]}$ beyond the aggregated model $\sum_{i=1}^{N} \mathbf{x}^{(t)}_i$. This is formally given as follows 
\begin{equation}\label{eq-SA_guarantee}
   I\left({\{\mathbf{y}}^{(t)}_i\}
   _{i \in[N]}; \{\mathbf{x}^{(t)}_i\}
   _{i \in[N]} \middle  |   \sum_{i=1}^{N} \mathbf{x}^{(t)}_i, \mb{z}_\mathcal{T} \right) = 0,
\end{equation}
where $\mb{z}_\mcal{T}$ is the collection of information at the users in $\mcal{T}$.
The assumption of having the set of colluding users to be   $|\mathcal{T}|\leq \frac{N}{2}$  is widely used in most of the prior  works   (e.g., \cite{so2021lightsecagg,so2021turbo}). 

\subsection{Differential privacy}
Differential privacy (DP) has emerged as a reliable mechanism for ensuring data privacy in Federated Learning (FL) through the injection of noise \cite{wei2020federated}. 
When using DP, the server aggregate model still retains an added noise component, unlike the SA approach where the server recovers a non-noisy version of the aggregate. By carefully controlling the level of noise introduced through the DP mechanism, a provable privacy guarantee can be achieved for the local data of participating users, even when the local or aggregated models are accessible to adversaries. However, the required noise  for achieving high level of privacy  may negatively impact the model's performance. In the 
 following we  formally define DP.

\noindent \begin{defin}[DP \cite{dwork2014algorithmic}]: A randomized mechanism $\mathcal{M}$ :
$\mathcal{X} \rightarrow \mathrm{R}$ with domain $\mathcal{X}$ and range  $\mathrm{R}$ satisfies ($\epsilon$, $\delta$)-DP, if for
all  sets $\mathcal{S} \subseteq{R} $ and for any two adjacent databases
$\mathcal{D}_i$, $\mathcal{D}'_i$ $\in \mathcal{X}$.  
\end{defin}

\begin{equation}
\label{eq:dp_define}
    Pr [ \mathcal{M}(\mathcal{D}_i) \in \mathcal{S}]\leq e^{\epsilon} Pr [ \mathcal{M}(\mathcal{D}'_i)]+\delta.
\end{equation}
In~\eqref{eq:dp_define}, $\epsilon$ represents the level of privacy, where  smaller  $\epsilon$ means a higher level of  privacy the mechanism $\mathcal{M}$ can achieve. On the other hand, $\delta$ is a relaxation item which represents the probability of the event that the ratio
$\frac{Pr[\mathcal{M}(\mathcal{D}_i) \in \mathcal{S}]}{Pr[\mathcal{M}(\mathcal{D}'_i) \in \mathcal{S}]}$  cannot be
always bounded by $e^{\epsilon}$
after applying the privacy-preserving mechanism $\mathcal{M}$. Specifically, when $\delta=0$, we define that the mechanism $\mathcal{M}$ satisfies pure differential privacy, i.e. $\epsilon$-DP. When $\delta>0$, we define that the mechanism $\mathcal{M}$ satisfies approximate differential privacy, i.e. $(\epsilon,\delta)$-DP.
To achieve local differential privacy in FL,  Gaussian mechanism is widely adapted \cite{wei2020federated}.  In particular, in each training round, each user $i \in [N]$ clips its model update $clip(\mathbf{x}_i,C)=\frac{C\mathbf{x}_i}{max(||\mathbf{x}_i||_2,C)}$, where 
$C$ is a clipping threshold used for bounding the model update of  each user.  
Each user then adds a Gaussian noise $n$  with a  standard deviation inversely proportional to the privacy level $\epsilon$.

Another relaxation of $\epsilon$-DP is Rényi Differential Privacy (RDP), which is defined based on the Rényi divergence. We first provide the definition of Rényi divergence between two probability distributions as follows:
\begin{defin}[Rényi Divergence\cite{gil2013renyi}]
\label{rdiv}
For two probability distributions as follows: $P$ and $Q$ over a continuous space $\mathcal{X}$, and for any order $\alpha > 0$, $\alpha \neq 1$, the Rényi divergence of order $\alpha$ from $P$ to $Q$ is defined as:
\[
D_{\alpha}(P \| Q) = \frac{1}{\alpha - 1} \log \mathbb{E}_{x \sim Q}\left[ \left( \frac{P(x)}{Q(x)} \right)^\alpha \right],
\]
where $\mathbb{E}_{x \sim Q}[\cdot]$ denotes the expectation with respect to $Q$. For $\alpha = 1$, $D_{\alpha}(P \| Q)$ converges to the KL divergence as $\alpha$ approaches 1.
\end{defin}
\noindent Based on Definition \ref{rdiv}, the definition of $(\alpha, \epsilon)$-RDP is given below:
\begin{defin}[$(\alpha, \epsilon)$-RDP\cite{mironov2017renyi}]
A randomized mechanism $\mathcal{M}:\mathcal{X}\rightarrow \mathcal{Y}$ satisfies $(\alpha, \epsilon)$-RDP if for any two adjacent inputs $x, x' \in \mathcal{X}$, the following inequality holds:
\begin{equation}
    D_{\alpha}(\mathcal{M}(x) \| \mathcal{M}(x')) \leq \epsilon,
\end{equation}
where $D_{\alpha}(\cdot \| \cdot)$ denotes the Rényi divergence of order $\alpha$ between two probability distributions.
\end{defin}
Compared with $(\epsilon,\delta)$-DP, RDP provides a more accurate and convenient way to analyze and quantify privacy loss, especially when multiple mechanisms are applied \cite{mironov2017renyi}. Moreover, when RDP is satisfied, $(\epsilon,\delta)$-DP can also be guaranteed. Lemma \ref{lemma1} below provides a bridge for converting RDP guarantees into $(\epsilon,\delta)$-DP guarantees, facilitating the application of differential privacy in various settings.
\noindent \begin{lemma}[From RDP to $(\epsilon,\delta)$-DP\cite{mironov2017renyi}]
A mechanism that satisfies $(\alpha, \epsilon)$-RDP also satisfies $(\epsilon + \frac{\log(1/\delta)}{\alpha-1}, \delta)$-DP for any $\delta > 0$. 
\label{lemma1}
\end{lemma}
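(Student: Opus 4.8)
The plan is to run the standard ``privacy-loss random variable'' argument that turns a Rényi-divergence bound into an $(\epsilon,\delta)$ guarantee via a Markov/Chernoff-type tail bound. Let $P=\mathcal{M}(x)$ and $Q=\mathcal{M}(x')$ be the output distributions on two adjacent inputs, with densities $p,q$ (Radon--Nikodym derivatives; note that $(\alpha,\epsilon)$-RDP with finite $\epsilon$ and $\alpha>1$ already forces $P\ll Q$, so the densities are well defined where needed). Fix a measurable $S$ and set $\epsilon' := \epsilon + \frac{\log(1/\delta)}{\alpha-1}$. The first step is to split $\{\mathcal{M}(x)\in S\}$ according to the privacy loss $Z:=\log\frac{p(Y)}{q(Y)}$ with $Y\sim P$, writing $P(S)=\Pr_{Y\sim P}[Y\in S,\,Z\le\epsilon'] + \Pr_{Y\sim P}[Y\in S,\,Z>\epsilon']$ and bounding the second term by $\Pr_{Y\sim P}[Z>\epsilon']$.

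For the first term I would use a change of measure directly: on $\{Z\le\epsilon'\}$ we have $p/q\le e^{\epsilon'}$, so $\Pr_{Y\sim P}[Y\in S,\,Z\le\epsilon'] = \int_{S\cap\{Z\le\epsilon'\}}\frac{p}{q}\,q \le e^{\epsilon'}Q(S)$. This gives the multiplicative part of the DP inequality, and it remains to show the additive part $\Pr_{Y\sim P}[Z>\epsilon']\le\delta$.

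The core step is to control that tail via Markov's inequality applied to $e^{(\alpha-1)Z}$ (here $\alpha>1$ is used so the exponent is positive): $\Pr_{Y\sim P}[Z>\epsilon'] = \Pr_{Y\sim P}\big[e^{(\alpha-1)Z} > e^{(\alpha-1)\epsilon'}\big] \le e^{-(\alpha-1)\epsilon'}\,\mathbb{E}_{Y\sim P}\big[e^{(\alpha-1)Z}\big]$. Then a one-line change of measure identifies the moment with the Rényi divergence: $\mathbb{E}_{Y\sim P}\big[(p/q)^{\alpha-1}\big] = \int (p/q)^{\alpha}\,q = \mathbb{E}_{Y\sim Q}\big[(p/q)^{\alpha}\big] = e^{(\alpha-1)D_\alpha(P\|Q)}$, which is $\le e^{(\alpha-1)\epsilon}$ by the $(\alpha,\epsilon)$-RDP hypothesis and Definition~\ref{rdiv}. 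Substituting the choice of $\epsilon'$ gives $\Pr_{Y\sim P}[Z>\epsilon'] \le e^{(\alpha-1)(\epsilon-\epsilon')} = \delta$. Combining the two parts yields $P(S)\le e^{\epsilon'}Q(S)+\delta$; since adjacency is symmetric, the same argument with $x,x'$ swapped gives the reverse inequality, so $\mathcal{M}$ is $(\epsilon',\delta)$-DP.

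I expect no conceptual obstacle here — the statement is essentially the Mironov conversion lemma — and the only care needed is measure-theoretic bookkeeping: ensuring absolute continuity so that the densities and the identity $\int(p/q)^\alpha q = e^{(\alpha-1)D_\alpha(P\|Q)}$ are valid, and noting that the argument genuinely requires $\alpha>1$ (consistent with the $\frac{\log(1/\delta)}{\alpha-1}$ term in the statement, which blows up as $\alpha\to 1$). Nothing beyond Markov's inequality, a change of measure, and Definition~\ref{rdiv} is used.
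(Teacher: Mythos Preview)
Your proof is correct and is precisely the standard argument from Mironov's paper. Note that the paper itself does not give a proof of this lemma at all; it simply states it with a citation to \cite{mironov2017renyi}, so there is no ``paper's own proof'' to compare against beyond the original source, which your argument faithfully reproduces.
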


\subsection{Threat model for FL with SA}
\noindent\textbf{Server.} We assume that the FL server can only observe the aggregated model updates from all users without accessing individual model update from any user. We assume that the FL server is honest-but-curious. It is honest in that it will follow the FL and SA protocols without violating them. It is curious since it may be interested in inferring sensitive local dataset of each individual user from the aggregate model updates.\\
\noindent\textbf{User.} We assume that the users are honest, in that they will follow the FL with SA protocol to calculate local model updates using their local dataset and send them to the server. We assume that the users will not send the incorrect model updates to the server.

\noindent\textbf{Privacy goal.} Our privacy goal is to achieve record-level DP for each user. This says, if a user changes one record in their local dataset, the server cannot distinguish the new record from the original one, by observing the aggregated model updates across multiple training rounds.
\section{Problem Statement}
\subsection{Motivation}
\begin{figure}
    \centering
    \includegraphics[width=0.8\linewidth]{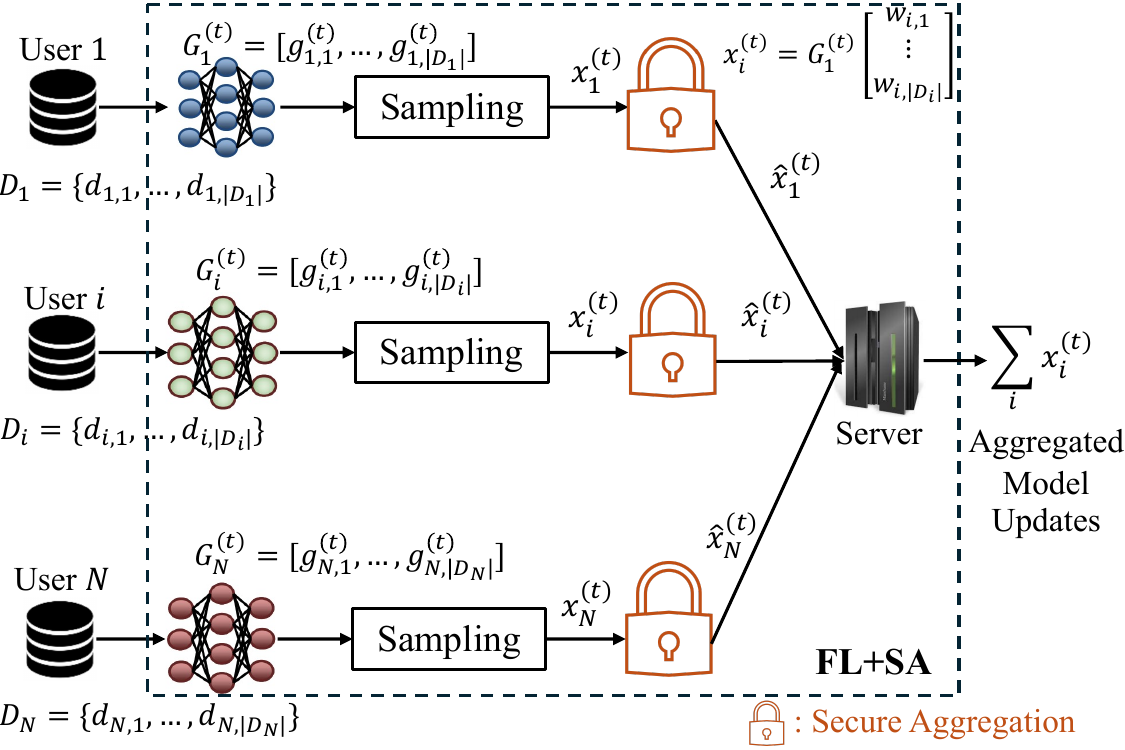}
    \caption{System model for FL with SA. Note that the input of this system is users' local datasets ($\{D_i\}_{i=1}^{i=N}$), and the output of the system is the aggregated model update ($\sum_{i=1}^{i=N}x_i^{(t)}$), which is a random vector due to users' local gradient (i.e. data batch) sampling. The server will infer user $i$'s local dataset ($D_i$) by observing $\sum_{i=1}^{i=N}x_i^{(t)}$.}
    \label{fig:sys_model}
\end{figure}
We define the local dataset of user $i$ as $D_i$, the model update from user $i$ at step $t$ as $x_{i}^{(t)}$, and the aggregated model update from all $N$ users excluding user $i$ as $x_{-i}^{(t)}=\sum_{j=1,j\neq i}^{j=N} x_{j}^{(t)}$. As demonstrated in Figure \ref{fig:sys_model}, when SA is used in FL, the server only receives  $x_{i}^{(t)}+x_{-i}^{(t)}$ from the aggregator (i.e. the output of the FL+SA system) without knowing $x_{i}^{(t)}$ from user $i$. Since each user conducts the data batch sampling when calculating local model update, the aggregated model update $x_{i}^{(t)}+x_{-i}^{(t)}$ can be viewed as a random vector, which can provide some privacy protection to user $i$'s local dataset $D_i$. Moreover, compared with the case where SA is not used and server knows $x_{i}^{(t)}$, user $i$ may add less or even no additional noise locally by oneself to achieve the same level of privacy, due to the leverage of noise from $x_{-i}^{(t)}$.

Prior work in \cite{elkordy2022much} uses Mutual Information (MI) \cite{ash2012information} as a metric to measure the privacy leakage in federated learning with secure aggregation. The authors have demonstrated that the model update from each user contains random noise, which can offer privacy protection for other users. Hence, as we aggregate more users, the MI privacy leakage of the individual model update from the aggregated model update will decrease, since more random noise is aggregated. 

Formally, the authors in \cite{elkordy2022much} use the MI between $D_{i}$ and $x_{i}^{(t)}+x_{-i}^{(t)}=\sum_{j=1}^{j=N} x_{j}^{(t)}$ as the privacy metric, which is defined as follows:

\begin{equation}
I_{\rm priv}^{(t)} = \max_{i\in[N]} I\left(D_{i} ;  \sum_{i=1}^Nx_i^{(t)}\middle|  \left\{\sum_{i=1}^Nx_i^{(k)}\right\}_{k\in[t-1]} \right). 
\end{equation}
For users with identical and independent data distributions, under the assumption that the model update can be composed of independent random variables after whitening, they prove that 
\begin{equation}\label{eq:I_priv_round}
I_{\rm priv}^{(t)} \leq \frac{C\ d}{(N-1)B} + \frac{d}{2}\log\left(\frac{N}{N-1}\right), 
\end{equation}
where $C$ is a constant, $d$ is the model size, and $B$ is the mini-batch size at training round $t$ in FedSGD. The inequality~\eqref{eq:I_priv_round} indicates that MI between the individual model update and the aggregated model update (i.e. the privacy leakage) reduces linearly with the number of users participating in FL.

However, MI only measures the on-average privacy leakage, without providing any privacy guarantees for the worst-case scenarios. Since worst-case privacy guarantees are stronger than on-average privacy guarantees because they guarantee privacy protection even in situations that are less likely to occur, it is important to investigate how much privacy can leak in the worse-case when combining SA with FL.


Formally, we define the probabilistic distribution of aggregated model update $x_{i}^{(t)}+ x_{-i}^{(t)}$ at step $t$ as $\mathcal{M}_{\{D_i\cup D_{-i},\theta^{(t)}\}}$, where:
\begin{equation}
\label{eq:dist}
    \mathcal{M}_{\{D_i\cup D_{-i},\theta^{(t)}\}}(x) = \mathtt{Pr}[x_{i}^{(t)}+ x_{-i}^{(t)}=x|\{D_i\}_{i=1}^{i=N},\theta^{(t)}],
\end{equation}
where $D_{-i}=\bigcup_{\substack{j=1,j \neq i}}^{N} D_i$. Suppose $D_i$ and $D_i^{'}$ are two instances of user $i$'s local dataset. If there exists $\alpha>1$ and $\epsilon>0$, such that for any $D_i$ and $D_i^{'}$ differing from one data point, the following inequality holds: 
\begin{equation}
    D_{\alpha}(\mathcal{M}_{\{D_i\cup D_{-i},\theta^{(t)}\}}||\mathcal{M}_{\{D_i^{'}\cup D_{-i},\theta^{(t)}\}})\leq \epsilon,
\end{equation}
then $x_{i}^{(t)}+ x_{-i}^{(t)}$ can provide $(\alpha,\epsilon)$-RDP guarantee and hence $(\epsilon + \frac{\log(1/\delta)}{\alpha-1}, \delta)$-DP for user $i$'s dataset $D_i$.

\subsection{Negative result for DP}
We start by using the counterexample from \cite{elkordy2022much} to show that the aggregated model update can not offer DP guarantee. Specifically, Figure~\ref{fig:counter_example} (from \cite{elkordy2022much}) demonstrates a worst-case scenario where the last 1/4 elements of model updates for user 1, 2, 3 (i.e. $x_1, x_2$ and $x_3$) are all zeros while the last 1/4 elements of model update from user 4 (i.e. $x_4$) are non-zero. Under this case, suppose user 4 changes one data point in their local dataset, making $x_4$ become $x_4'$. Then, an adversary can perfectly distinguish $\sum_{i=1}^{i=3}x_i+x_4$ from $\sum_{i=1}^{i=3}x_i+x_4'$. Hence, the aggregated model update can not provide DP guarantee for user 4's local dataset. 

While the privacy leakage in the worst-case scenarios can be significant, as long as the occurrence probability of worst-case scenarios is small, a reasonable on-average privacy guarantee (e.g. measured by MI) can be achieved. This is the key difference between worst-case privacy guarantee and on-average privacy guarantee.

\begin{figure}
\centering
\includegraphics[width=0.4\textwidth]{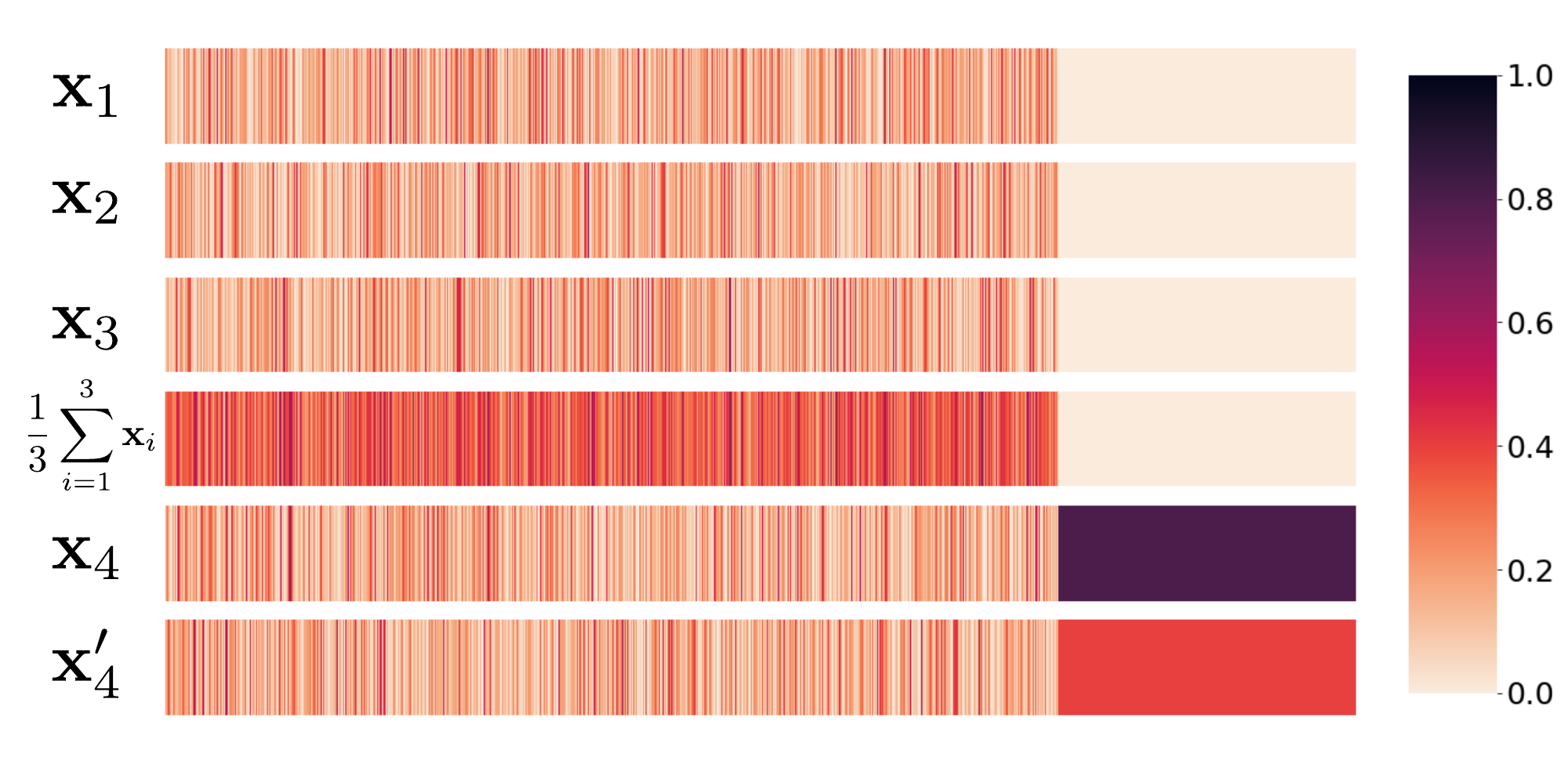}
\vspace{-.1in}
\caption{Heatmap of the absolute values of sampled updates from users $1,2$ and $3$ in the counterexample. $x_4$ and $x_4'$ can be distinguished even adding the aggregated noise from $\sum_{i=1}^3 x_i$. }
\vspace{-.1in}
\label{fig:counter_example}
\end{figure}

\subsection{What we need for DP guarantee}
\label{sec:whatweneedforDP}
Based on the above counterexample, a natural question is under what conditions the above worst case can be avoided, such that DP can be guarantee. In general, we need the following two conditions to be held:
\begin{enumerate}
    \item The random noise among the aggregated model update from others users should be independent of the model update of each individual user.
    \item The space of each individual user's model update should be included in the space of the aggregated model update from all the other users.
\end{enumerate}
The first condition is necessary since it stems from the fundamental ``context-unaware" requirements in DP \cite{dwork2014algorithmic}. The second condition is necessary for having a finite $\epsilon$. Specifically, as demonstrated in inequality~\eqref{eq:dp_define}, for any $x_{i}^{(t)}$ and $x_{i}^{(t)'}$ calculated from $D_i$ and $D_i^{'}$ differing from one data point, both $Pr[x_{i}^{(t)}+x_{-i}^{(t)}\in X] $ and $Pr[x_{i}^{(t)'}+x_{-i}^{(t)}\in X] $ should be non-zero. Otherwise, $\epsilon$ can be infinite. This says, the space of $x_{i}^{(t)}$ (i.e. the set consisting of all possible $x_{i}^{(t)}$) must be included in the space of $x_{-i}^{(t)}$ (i.e. the set consisting of all possible $x_{-i}^{(t)}$).

Motivated by the above statements on general conditions for achieving DP, in the next section we formalize several specific conditions under which we can analytically derive the $\epsilon$ bound of DP in FL with SA.


\section{Theoretical Results}
\label{sec:theory}
In this section, we analyze under what conditions FL with SA can provide DP guarantee when using FedSGD protocol. 

\subsection{Basic assumption for gradient noise}
We start by making the following assumption about FedSGD protocol in our analysis, which has been widely used in FL \cite{kairouz2019advances}.

\begin{assump}[Stochastic gradient with sampling noise and clipping]
Assume that each user $i$ at step $t$ uses its local dataset $D_{i}$ with size $|D_{i}|$ to compute the model update as follows:
\begin{equation}
\label{eq:fedsgd1}
x_{i}^{(t)} = G_{i}^{(t)}w_i=\begin{bmatrix} g_{i,1}^{(t)} & \cdots & g_{i,|D_{i}|}^{(t)} \end{bmatrix} \begin{bmatrix} w_{i,1} \\ \vdots \\ w_{i,|D_{i}|} \end{bmatrix},
\end{equation}
where 
$g_{i,j}^{(t)} = clip(\frac{\partial L(\theta^{(t)},d_{i,j})}{\partial \theta^{(t)}}, C)\in\mathbf{R}^{K}$ is clipped gradient computed using $d_{i,j}$, $K$ is the dimension of gradient (i.e. model size), $d_{i,j}$ is the $j$-th data point in user $i$'s local dataset $D_{i}$, $clip(x,C)=\frac{Cx}{max(||x||_2,C)}$ is a clipping function to bound the norm of vector $x$ by constant $C$, and $w_i\in\mathbf{R}^{|D_{i}|}$ is a random vector representing the sampling noise, which satisfies $\mathbb{E}_{w_i}[x_{i}^{(t)}]=\frac{\sum_{j=1}^{|D_{i}|}g_{i,j}^{(t)}}{|D_{i}|}$. Without loss of generality, we assume $|D_{i}|=D$ for all users.
\label{assum1}
\end{assump}
\noindent\textbf{Remark 1: Applicability of Assumption \ref{assum1}.} Assumption \ref{assum1} assumes that the gradient randomness is due to sampling ``noise", which holds in practice. For example, in traditional SGD, user $i$ will randomly and uniformly sample a mini-batch with size $B$ from its local dataset to calculate the model update at each step (denoted as IID sampling). In this case, each element in $w_i$ will be a random variable satisfying $w_{i,j} \in \{0, \frac{1}{B}\}$ and $\sum_{j=1}^{D} w_{i,j} = 1$, This says, for user $i$, we have $w_{i,j} = \frac{1}{B}$ for exactly $B$ indices $j$, and $w_{i,j} = 0$ for the other $D-B$ indices.
As another example, when user $i$ uses the Gaussian sampling noise proposed in \cite{wu2020noisy}, $w_i\sim N(\frac{1}{D}, \frac{1}{BD}I_D)$ is a Gaussian random vector. Depending on the sampling strategy, the distribution of the gradient noise in SGD will be different. 

The other assumption in Assumption \ref{assum1} is that users will clip their gradients, which is widely used in privacy-preserving machine learning, see, for example, \cite{abadi2016deep}.

Based on Assumption \ref{assum1}, from the perspective of user $i$, given its local dataset 
$D_{i}$ as input of the FL with SA system, the output of the system will be the aggregated model update calculated as:
\begin{equation}
\label{eq:global}
\begin{aligned}
    x^{(t)}=
    &x_{i}^{(t)}+ x_{-i}^{(t)}
    =G_{i}^{(t)}w_i + \sum_{j=1,j\neq i}^{N}G_{j}^{(t)}w_j=\sum_{j=1}^{N}G_{j}^{(t)}w_j \\
\end{aligned}
\end{equation}
where $w_1,...,w_N$ are independent with each other. 

%
\subsection{Necessary condition for DP guarantee}
We assume that $x_{i}^{(t)}$ and $x_{i}^{(t)'}$ are two model update instances of $x_{i}^{(t)}$ calculated from $D_i$ and $D_i^{'}$ differing from one data point respectively. Suppose that $D_i$ and $D_i^{'}$ differs from the $j$-th data point, we define the $j$-th data point in $D_i$ and $D_i^{'}$ as $d_{i,j}$ and $d_{i,j}^{'}$ respectively, and the corresponding gradient calculated from the $j$-th data point as $g_{i,j}^{(t)}$ and $g_{i,j}^{(t)'}$ respectively. Moreover, we define the span of vectors $\{v_1,..,v_K\}$ as $span(\{v_1,..,v_K\})=\{x_1v_1 + ... + x_Kv_K| \forall~x_1,...,x_K\in\mathbf{R}\}$ .

As motivated in Section \ref{sec:whatweneedforDP}, in order for the aggregated model update to guarantee $(\epsilon,\delta)$-DP with bounded $\epsilon$ value, for any $x_{i}^{(t)}$ and $x_{i}^{(t)'}$, and any $X\in Range(x_{i}^{(t)}+x_{-i}^{(t)})$ satisfying $Pr[x_{i}^{(t),1}+x_{-i}^{(t)}\in X]\geq \delta$, both $Pr[x_{i}^{(t),1}+x_{-i}^{(t)}\in X] $ and $Pr[x_{i}^{(t),2}+x_{-i}^{(t)}\in X] $ should be non-zero. Motivated by the above, we derive the following theorem.

\begin{thm}[A necessary condition for DP guarantee]
Under Assumption \ref{assum1}, if the aggregated model update $x_{i}^{(t)}+x_{-i}^{(t)}$ can provide $(\epsilon,\delta)$-DP user $i$'s dataset $D_i$ and $\delta \leq \mathtt{Pr}[w_{i,j}\neq 0]$, then for any $D_i$ and $D_i^{'}$ differs from the $j$-th data point, we must have $g_{i,j}^{(t)'}\in span\big(\bigcup_{j=1}^{N} \{g_{j,k}^{(t)}|\forall~k\in\{1,...,D\}\}\big)$.
\label{theorem_base}
\end{thm}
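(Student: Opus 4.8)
The plan is to argue by contraposition. Set $V := \mathrm{span}\big(\bigcup_{j'=1}^{N}\{g_{j',k}^{(t)} : k\in\{1,\dots,D\}\}\big)$ and suppose $g_{i,j}^{(t)'}\notin V$; I will construct a measurable witness set $X$ showing that $(\epsilon,\delta)$-DP for $D_i$ cannot hold once $\delta\le \mathtt{Pr}[w_{i,j}\neq 0]$. The crucial structural observation is that the ``true-dataset'' mechanism never leaves $V$: by \eqref{eq:global}, $\mathcal{M}(D_i)=x_i^{(t)}+x_{-i}^{(t)}=\sum_{j'=1}^{N}G_{j'}^{(t)}w_{j'}$ is, for every realization of $(w_1,\dots,w_N)$, a linear combination of the gradients $\{g_{j',k}^{(t)}\}$ and hence lies in $V$. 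Replacing $d_{i,j}$ by $d_{i,j}'$ changes only the $j$-th column of $G_i^{(t)}$, so $\mathcal{M}(D_i')=g_{i,j}^{(t)'}w_{i,j}+R$, where $R:=\sum_{k\neq j}g_{i,k}^{(t)}w_{i,k}+\sum_{j'\neq i}G_{j'}^{(t)}w_{j'}$ again lies in $V$ for every realization.

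Next I would decompose $\mathbf{R}^{K}=V\oplus V^{\perp}$ and project onto $V^{\perp}$. Writing $u_0:=\Pi_{V^{\perp}}(g_{i,j}^{(t)'})$, the assumption $g_{i,j}^{(t)'}\notin V$ gives $u_0\neq 0$, and then $\Pi_{V^{\perp}}(\mathcal{M}(D_i'))=w_{i,j}\,u_0$ while $\Pi_{V^{\perp}}(\mathcal{M}(D_i))=0$ identically. Choosing the Borel set $X:=\{x\in\mathbf{R}^{K}:\Pi_{V^{\perp}}(x)\neq 0\}=\mathbf{R}^{K}\setminus V$ therefore yields $\mathtt{Pr}[\mathcal{M}(D_i)\in X]=0$ and $\mathtt{Pr}[\mathcal{M}(D_i')\in X]=\mathtt{Pr}[w_{i,j}u_0\neq 0]=\mathtt{Pr}[w_{i,j}\neq 0]$, where the last equality uses that $w_{i,j}$ is a real scalar, $u_0\neq0$, and that the sampling noise $w_{i,j}$ has the same (data-independent) law under $D_i$ and $D_i'$.

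Finally I would invoke $(\epsilon,\delta)$-DP in the direction $\mathtt{Pr}[\mathcal{M}(D_i')\in X]\le e^{\epsilon}\,\mathtt{Pr}[\mathcal{M}(D_i)\in X]+\delta$, which collapses to $\mathtt{Pr}[w_{i,j}\neq 0]\le\delta$, contradicting $\delta\le\mathtt{Pr}[w_{i,j}\neq 0]$ (for $\delta<\mathtt{Pr}[w_{i,j}\neq0]$ this is immediate; the boundary case is degenerate and can be excluded by the same argument applied to subsets of $V$ on which the two conditional laws differ). Hence $g_{i,j}^{(t)'}\in V$, which is the claim. This also explains why the hypothesis $\delta\le\mathtt{Pr}[w_{i,j}\neq0]$ appears: for larger $\delta$ the ``escape'' event $\{w_{i,j}\neq0\}$ is absorbed into $\delta$ and DP can survive even when $g_{i,j}^{(t)'}\notin V$.

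I expect the crux — rather than a genuine obstacle — to be the first step: realizing that the aggregated update from the real dataset is confined to the fixed subspace $V$ generated by all users' (unprimed) gradients, so that $V^{\perp}$ is the canonical distinguishing direction, and that the edit $d_{i,j}\to d_{i,j}'$ perturbs the output only along $w_{i,j}g_{i,j}^{(t)'}$. The remaining parts — measurability of $X$, the scalar algebra for the $V^{\perp}$-projection, and the $\delta$ boundary bookkeeping — are routine. The result is precisely the formalization of the second necessary condition discussed in Section~\ref{sec:whatweneedforDP}.
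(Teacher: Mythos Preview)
Your proposal is correct and follows essentially the same contraposition as the paper: both show that with the unprimed dataset the aggregate always lies in the span $V$, while with the primed dataset it escapes $V$ exactly when $w_{i,j}\neq 0$, yielding a witness set with probabilities $0$ and $\mathtt{Pr}[w_{i,j}\neq 0]$. Your choice $X=\mathbf{R}^K\setminus V$ via the orthogonal projection is slightly cleaner than the paper's (which takes $X$ to be the image set $\{x_i^{(t)'}+x_{-i}^{(t)}:w_{i,j}\neq 0\}$), but the argument is the same.
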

\begin{proof}
Assume that there exists a data point $d_{i,j}^{'}$, such that $g_{i,j}^{(t)'}\notin span\big(\bigcup_{j=1}^{N} \{g_{j,k}^{(t)}|\forall~k\in\{1,...,D\}\}\big)$ when user $i$ changes $d_{i,j}$ into $d_{i,j}^{'}$. The aggregated model update when user $i$ has local dataset $D_i$ is calculated as:
\begin{equation}
\label{eq:aggre_update1}
\begin{aligned}
    x_{i}^{(t)}+ x_{-i}^{(t)} &=g_{i,j}^{(t)}w_{i,j} + \sum_{k=1,k\neq j}^{D}g_{i,k}^{(t)}w_{i,k} + \sum_{j=1,j\neq i}^{N}\sum_{k=1}^{D}g_{j,k}^{(t)}w_{j,k}
\end{aligned}
\end{equation}
and the aggregated model update when user $i$ has local dataset $D_i^{'}$ is calculated as:
\begin{equation}
\label{eq:aggre_update2}
\begin{aligned}
    x_{i}^{(t)'}+ x_{-i}^{(t)} &=g_{i,j}^{(t)'}w_{i,j} + \sum_{k=1,k\neq j}^{D}g_{i,k}^{(t)}w_{i,k} + \sum_{j=1,j\neq i}^{N}\sum_{k=1}^{D}g_{j,k}^{(t)}w_{j,k}
\end{aligned}
\end{equation}
Next, we define $X$ as $\{x_{i}^{(t)'}+ x_{-i}^{(t)}|\forall~w_{i,j}\neq 0\}$. We can derive that $\mathtt{Pr}[x_{i}^{(t)'}+ x_{-i}^{(t)}\in X]=\mathtt{Pr}[w_{i,j}\neq 0]$.
Since 
$$g_{i,j}^{(t)'}\notin span\big(\bigcup_{j=1}^{N} \{g_{j,k}^{(t)}|\forall~k\in\{1,...,D\}\}\big),$$ 
and 
$$x_{i}^{(t)}+ x_{-i}^{(t)}\in span\big(\bigcup_{j=1}^{N} \{g_{j,k}^{(t)}|\forall~k\in\{1,...,D\}\}\big),$$ 
then $\forall x\in X$, we have $\mathtt{Pr}[x_{i}^{(t)}+ x_{-i}^{(t)}=x]=0$. This says $\mathtt{Pr}[x_{i}^{(t)}+ x_{-i}^{(t)}\in X]=0$.
Then, $\forall~\epsilon\in\mathbf{R}$, we have $\mathtt{Pr}[x_{i}^{(t)'}+ x_{-i}^{(t)}\in X]-e^\epsilon\mathtt{Pr}[x_{i}^{(t)}+ x_{-i}^{(t)}\in X]=\mathtt{Pr}[w_{i,j}\neq 0]\geq \delta$. Therefore, $(\epsilon,\delta)$-DP cannot be guaranteed for $\delta \leq \mathtt{Pr}[w_{i,j}\neq 0]$ \footnote{Note that $\mathtt{Pr}[w_{i,j}\neq 0]$ indicates the probability that the $j$-th data point is sampled to calculate the model update from user $i$ (e.g. $\mathtt{Pr}[w_{i,j}\neq 0]=\frac{B}{D}$ in IID sampling and $\mathtt{Pr}[w_{i,j}\neq 0]=1$ in Gaussian sampling). Although $(\epsilon,\delta)$-DP may hold for $\delta > \mathtt{Pr}[w_{i,j}\neq 0]$, $\mathtt{Pr}[w_{i,j}\neq 0]$ is significantly larger than a meaningful bound for $\delta$ \cite{dwork2014algorithmic}.}.
\end{proof}

Theorem \ref{theorem_base} states that a necessary condition for aggregated model update to provide DP guarantee for each user's local dataset is that any change of one data point in user $i$'s local dataset will not change the span of all individual gradients from all users. Since the aggregated model update belongs to the span of all individual gradients (see Eq. (\ref{eq:aggre_update1})), when the server observes that the aggregated model is from a different span, it can potentially identify or reconstruct the data point which causes this change. Therefore, the worst-case privacy DP guarantee is violated. Recent works \cite{pasquini2022eluding} have proposed attacks to elude SA in FL via model inconsistency across clients or model manipulation. Fundamentally, these attacks work by making the model updates from different users fall into different vector space and hence the necessary condition for worst-privacy guarantee in Theorem \ref{theorem_base} will be violated, causing privacy leakage.

\subsection{Gaussian sampling noise with non-singular covariance matrix}
\label{subsec:gaussian_non_singular}
The most common type of sampling noise is IID sampling. Specifically, consider the scenario where each user $i$ will randomly and uniformly sample a mini-batch with size $B$ from its local dataset to calculate the model update at each step. We define $span^{\frac{1}{B}}\big(\{v_1,..,v_K\}\big)=\{x_1v_1 + ... + x_Kv_K| \forall~x_1,...,x_K\in \{0, \frac{1}{B}\}~and~\sum_{k=1}^{K}x_k=1\}$. Since in practice, each user has a limited amount of data points in their local dataset, the aggregated model update $x_{i}^{(t)}+ x_{-i}^{(t)}$ will be a random vector which takes value from a finite set (i.e. $span^{\frac{1}{B}}\big(\bigcup_{j=1}^{N} \{g_{j,k}^{(t)}|\forall~k\in\{1,...,D\}\}\big)$). Without assuming the distribution of user $i$'s dataset, $g_{i,j}^{(t)'}$ can have infinite amount of possible values. In this case, $g_{i,j}^{(t)'}\in span^{\frac{1}{B}}\big(\bigcup_{j=1}^{N} \{g_{j,k}^{(t)}|\forall~k\in\{1,...,D\}\}\big)$ cannot be guaranteed and hence DP is violated based on Theorem \ref{theorem_base}.

In the rest of this section, we consider a special type of sampling noise called Gaussian sampling noise, where the weights of individual gradients in model update of each user $i$ are Gaussian. We prove that a closed-form $\epsilon$ bound in DP can be derived for Gaussian sampling noise when the covariance matrix of sampling noise of each user (or the projected covariance matrix) is non-singular. 


%
\begin{assump}[Gaussian sampling noise]
Based on Assumption \ref{assum1}, we assume that each user $i$ uses Gaussian sampling noise, which satisfies $w_i=\frac{1}{D} + \frac{1}{\sqrt{BD}}\xi_i$ and $\xi_i\sim N(0, I_D)$. Then, the covariance matrix of gradient noise is derived as:
\begin{equation}
\begin{aligned}
 \Sigma_i^{(t)}&=\frac{1}{D}G_{i}^{(t)}(G_{i}^{(t)})^T=\frac{1}{D}\sum_{j=1}^{j=D}g_{i,j}^{(t)}(g_{i,j}^{(t)})^T
\end{aligned}
\end{equation}
We define the SVD decomposition of covariance matrix as $\Sigma_i^{(t)}=U_{i}^{(t)}\lambda_{i}^{(t)}\big(U_{i}^{(t)}\big)^T$. 
Then, the model update from user $i$ at step $t$ can be computed as 
\begin{equation}
\label{eq:gaussin_gradient}
    x_{i}^{(t)} = \bar g_{i}^{(t)} + \frac{1}{\sqrt{BD}}G_{i}^{(t)}\xi_i = \bar g_{i}^{(t)} + \frac{1}{\sqrt{B}}L_{i}^{(t)}v_i,
\end{equation}
where $\bar g_{i}^{(t)}=\frac{1}{D}\sum_{j=1}^{D}g_{i,j}^{(t)}$, $L_{i}^{(t)}=U_{i}^{(t)}\big(\lambda_{i}^{(t)}\big)^{\frac{1}{2}}$, 
$v_i\sim \mathcal{N}(0, I_K)$ ($K$ is the dimension of model update), $v_1,...,v_N$ are independent with each other, and $B$ is a scaling factor to adjust the magnitude of the covariance\footnote{Note that when $B$ is set as the mini-batch size in classical SGD (i.e. take the average of $B$ IID sampled gradient as model update), the covariance of Gaussian gradient noise will be close to the covariance of classical SGD noise \cite{wu2020noisy}.}.
\label{assum2}
\end{assump}
Based on Assumption \ref{assum2}, the aggregated model update at step $t$ can be written as:
\begin{equation}
\label{eq:global}
\begin{aligned}
    x^{(t)}=
    &\sum_{j=1}^{N}G_{j}^{(t)}w_j
    =\sum_{j=1}^{N}\bar g_{j}^{(t)}+ \sum_{j=1}^{N}\frac{1}{\sqrt{B}}L_{i}^{(t)}v_j. \\
\end{aligned}
\end{equation}
Moreover, given that $v_1,...,v_N$ are independent normal Gaussian, $x^{(t)}$ will also be Gaussian with mean $\sum_{j=1}^{N}\bar g_{j}^{(t)}$ and covariance matrix $\frac{\sum_{j=1}^{N}\Sigma_j^{(t)}}{B}$.
\noindent \begin{lemma}[Bounded maximal singular value] 
\label{lemma2}
    Based on Assumption \ref{assum1}, for each user $i$, the maximal singular value of the gradient covariance matrix of gradient noise $\Sigma_i^{(t)}$ is upper bounded by $C^2$.
\end{lemma}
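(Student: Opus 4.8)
The plan is to exploit the fact that $\Sigma_i^{(t)}$ is a symmetric positive semidefinite matrix, so its maximal singular value coincides with its largest eigenvalue, which in turn equals the spectral (operator) norm $\|\Sigma_i^{(t)}\|_2 = \max_{\|u\|_2 = 1} u^\top \Sigma_i^{(t)} u$. The symmetry and positive semidefiniteness are immediate from the representation $\Sigma_i^{(t)} = \frac{1}{D}\sum_{j=1}^{D} g_{i,j}^{(t)}(g_{i,j}^{(t)})^\top$ in Assumption \ref{assum2}, since this is a nonnegative combination of rank-one PSD matrices. The only other ingredient is the norm control that clipping enforces: by the definition of $clip(\cdot, C)$ in Assumption \ref{assum1}, every clipped gradient satisfies $\|g_{i,j}^{(t)}\|_2 \le C$.

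First I would fix a unit vector $u \in \mathbf{R}^{K}$ and expand the quadratic form as $u^\top \Sigma_i^{(t)} u = \frac{1}{D}\sum_{j=1}^{D}(u^\top g_{i,j}^{(t)})^2$. Then I would apply Cauchy--Schwarz termwise, $(u^\top g_{i,j}^{(t)})^2 \le \|u\|_2^2\,\|g_{i,j}^{(t)}\|_2^2 \le C^2$, and average over $j\in\{1,\dots,D\}$ to obtain $u^\top \Sigma_i^{(t)} u \le C^2$. Taking the supremum over all unit $u$ yields $\sigma_{\max}(\Sigma_i^{(t)}) = \|\Sigma_i^{(t)}\|_2 \le C^2$, which is the claim.

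An equivalent route would be to bound the spectral norm directly via the triangle inequality for the operator norm: $\|\Sigma_i^{(t)}\|_2 \le \frac{1}{D}\sum_{j=1}^{D}\big\|g_{i,j}^{(t)}(g_{i,j}^{(t)})^\top\big\|_2 = \frac{1}{D}\sum_{j=1}^{D}\|g_{i,j}^{(t)}\|_2^2 \le C^2$, using that the rank-one matrix $vv^\top$ has spectral norm exactly $\|v\|_2^2$. Either derivation is a couple of lines.

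There is essentially no hard step here; the result follows from clipping together with elementary linear algebra. The only points worth stating carefully are the reduction from ``maximal singular value'' to ``maximal eigenvalue'' (valid precisely because $\Sigma_i^{(t)}$ is symmetric PSD), and the role of the $\frac{1}{D}$ normalization in making the bound exactly $C^2$ rather than $D C^2$.
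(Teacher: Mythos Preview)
Your proof is correct and follows essentially the same approach as the paper: expand the quadratic form $u^\top \Sigma_i^{(t)} u$ as an average of squared inner products, bound each term via Cauchy--Schwarz and the clipping constraint $\|g_{i,j}^{(t)}\|_2 \le C$, and conclude. Your write-up is slightly more careful in justifying why the maximal singular value equals the maximal eigenvalue for a symmetric PSD matrix, but the core argument is identical.
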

\begin{proof}
    $\forall~x\in\mathbf{R}^K$, we have $x^T\Sigma_i^{(t)}x=\frac{1}{D}\sum_{j=1}^{j=D}x^T(g_{i,j}^{(t)})^Tg_{i,j}^{(t)}x=\frac{1}{D}\sum_{j=1}^{j=D}(g_{i,j}^{(t)}x)^2$. Since $|g_{i,j}^{(t)}x|$$\leq||g_{i,j}^{(t)}||_2||x||_2\leq C||x||_2$, we have $x^T\Sigma_i^{(t)}x\leq C^2||x||_2^2$. Hence, the maximal singular value of $\Sigma_i^{(t)}$ is upper bounded by $C^2$.
\end{proof}
\begin{assump}[Non-singular covariance matrix]
We assume that for each user $i$, the covariance matrix of gradient noise  $\Sigma_i^{(t)}$ is non-singular, with non-zero minimal eigenvalue lower bounded by $\lambda_{i,min}^{(t)}>0$.
\label{assum3}
\end{assump}
\begin{thm}
Under Assumption \ref{assum1},\ref{assum2},\ref{assum3}, if $\frac{C^2}{D} < \sum_{i=1}^{N}\lambda_{i,min}^{(t)}$, then $\forall~\alpha\in (1, \frac{D\sum_{i=1}^{N}\lambda_{i,min}^{(t)}}{C^2})$, the aggregated model update $x^{(t)}$ at step $t$ can provide $(\alpha,\epsilon_i^{(t)})$-RDP and $(\epsilon_i^{(t)}+\frac{\log(1/\delta)}{\alpha-1},\delta)$-DP for user $i$'s dataset $D_i$, where:
\begin{equation}
\label{eq:lambda1}
\epsilon_i^{(t)} =
    (\frac{2\alpha BC^2}{D^2}+\frac{\alpha C^2}{(\alpha-1)D})\frac{1}{\sum_{i=1}^{N}\lambda_{i,min}^{(t)}-\frac{\alpha C^2}{D}}.
\end{equation}
\label{theorem1}
\end{thm}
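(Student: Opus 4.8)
The plan is to exploit the Gaussianity already established right after Assumption \ref{assum2}. When user $i$ replaces its $j$-th data point, the aggregate $x^{(t)}$ stays Gaussian; only user $i$'s $j$-th gradient changes, from $g := g_{i,j}^{(t)}$ to $g' := g_{i,j}^{(t)'}$. Writing $S := \sum_{k=1}^{N}\Sigma_{k}^{(t)}$, $\Delta := \frac{1}{D}\big(g'(g')^{T}-gg^{T}\big)$ and $\Lambda := \sum_{k=1}^{N}\lambda_{k,\min}^{(t)}$, the two output distributions are $\mathcal{N}(\mu_{0},\,S/B)$ and $\mathcal{N}\big(\mu_{0}+\tfrac{1}{D}(g'-g),\,(S+\Delta)/B\big)$. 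The clipping in Assumption \ref{assum1} (cf. Lemma \ref{lemma2}) gives $\|\tfrac{1}{D}(g'-g)\|_{2}\le 2C/D$ and $\|\Delta\|_{\mathrm{op}}\le C^{2}/D$, while Assumption \ref{assum3} gives $S\succeq\Lambda I$.

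Next I would substitute these into the closed-form Rényi divergence between two Gaussians,
\[
D_{\alpha}\big(\mathcal{N}(\mu_{0},\Sigma_{0})\,\|\,\mathcal{N}(\mu_{1},\Sigma_{1})\big)=\frac{\alpha}{2}(\mu_{0}-\mu_{1})^{T}\Sigma_{\alpha}^{-1}(\mu_{0}-\mu_{1})-\frac{1}{2(\alpha-1)}\log\frac{\det\Sigma_{\alpha}}{\det(\Sigma_{0})^{1-\alpha}\det(\Sigma_{1})^{\alpha}},
\]
with $\Sigma_{\alpha}=(1-\alpha)\Sigma_{0}+\alpha\Sigma_{1}$, which is valid precisely when $\Sigma_{\alpha}\succ 0$. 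In our case $\Sigma_{\alpha}=(S+\alpha\Delta)/B\succeq\frac{1}{B}(\Lambda-\alpha C^{2}/D)I$, so positive-definiteness holds exactly on the admissible range $\alpha<D\Lambda/C^{2}$; this is where the hypothesis $C^{2}/D<\Lambda$ is used. The scalar $B$ cancels from the log-determinant ratio, leaving it only in the mean term.

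The mean term is bounded by $\frac{\alpha}{2}\cdot\frac{B}{\Lambda-\alpha C^{2}/D}\cdot\big(\tfrac{2C}{D}\big)^{2}=\frac{2\alpha BC^{2}/D^{2}}{\Lambda-\alpha C^{2}/D}$, the first summand of $\epsilon_{i}^{(t)}$. For the log-determinant term I would pull $S^{1/2}$ out of each determinant and rewrite it as $\frac{1}{2(\alpha-1)}\sum_{\ell}\big[\alpha\log(1+m_{\ell})-\log(1+\alpha m_{\ell})\big]$, where $m_{\ell}$ are the eigenvalues of $M:=S^{-1/2}\Delta S^{-1/2}$. Since $\Delta$ has rank at most $2$, at most two $m_{\ell}$ are nonzero, and $\|M\|_{\mathrm{op}}\le\|S^{-1}\|_{\mathrm{op}}\|\Delta\|_{\mathrm{op}}\le C^{2}/(D\Lambda)$, so $\alpha|m_{\ell}|<1$ and all logarithms are defined. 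The elementary inequalities $\log(1+x)\le x$ and $-\log(1-x)\le x/(1-x)$ then show each nonzero summand is at most $\frac{\alpha C^{2}/D}{\Lambda-\alpha C^{2}/D}$, so the log-determinant term is at most $\frac{1}{2(\alpha-1)}\cdot\frac{2\alpha C^{2}/D}{\Lambda-\alpha C^{2}/D}=\frac{\alpha C^{2}}{(\alpha-1)D(\Lambda-\alpha C^{2}/D)}$, the second summand of $\epsilon_{i}^{(t)}$. Since adjacency is symmetric, I would repeat the computation for the reversed divergence, whose mixing covariance is $(S-(\alpha-1)\Delta)/B\succeq\frac{1}{B}(\Lambda-(\alpha-1)C^{2}/D)I\succ 0$, and verify the same (in fact slightly smaller) bound; this establishes $(\alpha,\epsilon_{i}^{(t)})$-RDP, and Lemma \ref{lemma1} converts it to $(\epsilon_{i}^{(t)}+\frac{\log(1/\delta)}{\alpha-1},\delta)$-DP.

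I expect the log-determinant term to be the main obstacle: one must recognize $\Delta$ as an indefinite rank-at-most-$2$ perturbation, reduce the determinant ratio to the two eigenvalues of $S^{-1/2}\Delta S^{-1/2}$, control those eigenvalues using both clipping and non-singularity, and pick $\log$-inequalities sharp enough to reproduce the exact constant $\frac{\alpha C^{2}}{(\alpha-1)D}$ rather than a weaker bound that is quadratic in $C^{2}$ — plus the bookkeeping of checking that the mean-side bound and the admissibility condition $\alpha<D\Lambda/C^{2}$ survive in the reversed direction as well.
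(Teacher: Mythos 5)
Your proposal is correct and arrives at exactly the stated bound for $\epsilon_i^{(t)}$. The skeleton coincides with the paper's own proof: both write the two output laws as Gaussians $\mathcal{N}(\cdot,\,S/B)$ and $\mathcal{N}(\cdot,\,(S+\Delta)/B)$, invoke the closed-form Gaussian R\'enyi divergence, verify $\Sigma_\alpha\succ 0$ on the range $\alpha<D\Lambda/C^2$ from $\|\Delta\|_{\mathrm{op}}\le C^2/D$ together with $S\succeq\Lambda I$, and bound the mean term in the identical way (yielding $\frac{2\alpha BC^2}{D^2(\Lambda-\alpha C^2/D)}$). The genuine difference is in the log-determinant term. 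The paper applies the first-order concavity bound $\ln|Y|-\ln|X|\le\mathbf{tr}\big(X^{-1}(Y-X)\big)$ twice and then controls the four resulting traces with von Neumann's trace inequality, keeping the two rank-one PSD pieces $g g^{T}$ and $g'(g')^{T}$ separate so that only $\mathbf{tr}(gg^{T})\le C^2$ and the minimum eigenvalues of $\Sigma_1$ and $\Sigma_\alpha$ are needed. You instead whiten the perturbation, diagonalize $M=S^{-1/2}\Delta S^{-1/2}$, exploit that it has rank at most two, and apply the scalar inequalities $\log(1+x)\le x$ and $-\log(1-x)\le x/(1-x)$ to each of the (at most two) nonzero eigenvalues; both routes produce the same constant $\frac{\alpha C^2}{(\alpha-1)D}\cdot\frac{1}{\Lambda-\alpha C^2/D}$. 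Your route makes the low-rank structure of a single-record change explicit and avoids the trace inequality; the paper's route avoids eigenvalue bookkeeping and extends more directly to higher-rank perturbations. One small remark: your final check of the reversed divergence is harmless but unnecessary --- adjacency is a symmetric relation and your bound depends only on $\|g\|_2,\|g'\|_2\le C$, so computing the divergence in one direction over all adjacent pairs already certifies RDP; the paper likewise computes only one direction.
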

\noindent\begin{proof}
Under Assumption \ref{assum1} and \ref{assum2}, based on Eq. (\ref{eq:dist}) and Eq. (\ref{eq:global}), for any $D_i$ and $D_i^{'}$ differing from one data point,
the Rényi divergence between $\mathcal{M}_{\{D_i\cup D_{-i},\theta^{(t)}\}}$ (i.e. the probabilistic distribution of aggregated model update when $D_i$ is used) and $\mathcal{M}_{\{D_i^{'}\cup D_{-i},\theta^{(t)}\}}$ (i.e. the probabilistic distribution of aggregated model update when $D_i^{'}$ is used) is derived as the Rényi divergence between two Gaussian distribution:
\begin{equation}
\label{eq:rd_bound}
\begin{aligned}
&D_{\alpha}(\mathcal{M}_{\{D_i\cup D_{-i},\theta^{(t)}\}} \| \mathcal{M}_{\{D_i^{'}\cup D_{-i},\theta^{(t)}\}}) \\
= &D_{\alpha}(\mathcal{N}(\bar g_{i}^{(t)}+\bar g_{-i}^{(t)}, \frac{\Sigma_i^{(t)}+\Sigma_{-i}^{(t)}}{B}) \|\mathcal{N}(\bar g_{i}^{(t)'}+\bar g_{-i}^{(t)}, \frac{\Sigma_i^{(t)'}+\Sigma_{-i}^{(t)}}{B}))
\end{aligned}
\end{equation}

Without loss of generality, we assume that $D_i$ and $D_i^{'}$ differs from the first data point, which are denoted as $d_{i,1}$ and $d_{i,1}^{'}$ respectively, and the corresponding gradients calcualted by these two data points are $g_{i,1}^{(t)}$ and $g_{i,1}^{(t)'}$. We define $\Delta \bar g_{i}^{(t)} = \bar g_{i}^{(t)}-\bar g_{i}^{(t)'} = \frac{1}{D}(g_{i,1}^{(t)}-g_{i,1}^{(t)'})$, $\Sigma_1=\frac{\Sigma_i^{(t)}+\Sigma_{-i}^{(t)}}{B}$, $\Sigma_2=\frac{\Sigma_i^{(t)'}+\Sigma_{-i}^{(t)}}{B}$, and $\Sigma_\alpha=(1-\alpha)\Sigma_1 + \alpha\Sigma_2$.

Since $\forall~x\in\mathbf{R}^K$,
\begin{equation}
\begin{aligned}
&x^T\Sigma_\alpha x=x^T((1-\alpha)\Sigma_1 + \alpha\Sigma_2)x \\
=&x^T\Sigma_1 x + x^T\alpha(\Sigma_2-\Sigma_1)x \\
=&x^T\Sigma_1 x + x^T\frac{\alpha}{B}(\Sigma_i^{(t)'}-\Sigma_i^{(t)})x\\
=&x^T\Sigma_1 x + x^T\frac{\alpha}{BD}(g_{i,1}^{(t)'}(g_{i,1}^{(t)'})^T-g_{i,1}^{(t)}(g_{i,1}^{(t)})^T)x \\
\geq & (\frac{\sum_{i=1}^{N}\lambda_{i,min}^{(t)}}{B}-\frac{\alpha C^2}{BD})x^Tx,
\end{aligned}
\end{equation}
and $\frac{\alpha C^2}{D} < \sum_{i=1}^{N}\lambda_{i,min}^{(t)}$,
we know that $\Sigma_\alpha$ is positive definite and its minimal eigenvalue is lower bounded by $\frac{\sum_{i=1}^{N}\lambda_{i,min}^{(t)}}{B}-\frac{\alpha C^2}{BD}$.
Therefore, based on Table 2 in \cite{gil2013renyi}, we have:
\begin{equation}
\label{eq:rd_bound2}
\begin{aligned}
&D_{\alpha}(\mathcal{N}(\bar g_{i}^{(t)}+\bar g_{-i}^{(t)}, \frac{\Sigma_i^{(t)}+\Sigma_{-i}^{(t)}}{B}) \|\mathcal{N}(\bar g_{i}^{(t)'}+\bar g_{-i}^{(t)}, \frac{\Sigma_i^{(t)'}+\Sigma_{-i}^{(t)}}{B})) \\
= & \underbrace{\frac{\alpha}{2}(\Delta \bar g_{i}^{(t)})^T \Sigma_\alpha^{-1}\Delta \bar g_{i}^{(t)}}_{Term~I}\underbrace{ -\frac{1}{2(\alpha - 1)}\ln\frac{|\Sigma_\alpha|}{|\Sigma_1|^{1-\alpha}|\Sigma_2|^{\alpha}}}_{Term~II}.
\end{aligned}
\end{equation}
First, we calculate the upper bound for Term I. Since $||\Delta \bar g_{i}^{(t)}||_2^2=||\frac{1}{D}(g_{i,1}^{(t)}-g_{i,1}^{(t)'})||_2^2 \leq \frac{4C^2}{D^2}$ and $\Sigma_\alpha$ has minimal eigenvalue lower bounded by $\frac{\sum_{i=1}^{N}\lambda_{i,min}^{(t)}}{B}-\frac{\alpha C^2}{BD}$,
we have:
\begin{equation}
\label{eq:rd_bound1_1}
\begin{aligned}
\frac{\alpha}{2}(\Delta \bar g_{i}^{(t)})^T \Sigma_\alpha^{-1}\Delta \bar g_{i}^{(t)} \leq \frac{2\alpha C^2}{D^2(\frac{\sum_{i=1}^{N}\lambda_{i,min}^{(t)}}{B}-\frac{\alpha C^2}{BD})}.
\end{aligned}
\end{equation}
Next, we calculate the upper bound for Term II. Due to the concavity of $\ln |X|$ on positive semi-definite matrix, we have
\begin{equation}
\label{eq:rd_bound2}
\begin{aligned}
&\ln\frac{|\Sigma_1|^{1-\alpha}|\Sigma_2|^{\alpha}}{|\Sigma_\alpha|}=(1-\alpha)\ln|\Sigma_1| + \alpha\ln|\Sigma_2| - \ln|\Sigma_\alpha| \\
= &\alpha(\ln|\Sigma_2| - \ln|\Sigma_1|) + (\ln|\Sigma_1|-\ln|\Sigma_\alpha|)\\
\leq & \alpha\mathbf{tr}(\Sigma_1^{-1}(\Sigma_2 - \Sigma_1)) + \mathbf{tr}(\Sigma_\alpha^{-1}(\Sigma_1 - \Sigma_\alpha)) \\
= & \alpha\mathbf{tr}(\Sigma_1^{-1}(\frac{1}{BD}g_{i,1}^{(t)'}(g_{i,1}^{(t)'})^T - \frac{1}{BD}g_{i,1}^{(t)}(g_{i,1}^{(t)})^T)) \\
& +\mathbf{tr}(\Sigma_\alpha^{-1}(\frac{\alpha}{BD}g_{i,1}^{(t)}(g_{i,1}^{(t)})^T - \frac{\alpha}{BD}g_{i,1}^{(t)'}(g_{i,1}^{(t)'})^T))) \\
= & \frac{\alpha}{BD}\mathbf{tr}(\Sigma_1^{-1}(g_{i,1}^{(t)'}(g_{i,1}^{(t)'})^T)) - \frac{\alpha}{BD}\mathbf{tr}(\Sigma_1^{-1}(g_{i,1}^{(t)}(g_{i,1}^{(t)})^T)) \\
& +\frac{\alpha}{BD}\mathbf{tr}(\Sigma_\alpha^{-1}(g_{i,1}^{(t)}(g_{i,1}^{(t)})^T)) - \frac{\alpha}{BD}\mathbf{tr}(\Sigma_\alpha^{-1}(g_{i,1}^{(t)'}(g_{i,1}^{(t)'})^T)) \\
\end{aligned}
\end{equation}
where $\mathbf{tr}(X)$ denotes the trace of matrix $X$.
Given that both $g_{i,1}^{(t)}(g_{i,1}^{(t)})^T$ and $g_{i,1}^{(t)'}(g_{i,1}^{(t)'})^T$ are semi-positive definite matrix, based on von Neumann's trace inequality \cite{mirsky1975trace}, we have:
\begin{equation}
    \mathbf{tr}(\Sigma_1^{-1}(g_{i,1}^{(t)'}(g_{i,1}^{(t)'})^T))\leq \frac{\mathbf{tr}(g_{i,1}^{(t)'}(g_{i,1}^{(t)'})^T)}{\displaystyle\min_{j=1,...,K}\lambda_{1,j}}\leq \frac{C^2}{\displaystyle\min_{j=1,...,K}\lambda_{1,j}},
\end{equation}
\begin{equation}
    \mathbf{tr}(\Sigma_1^{-1}(g_{i,1}^{(t)}(g_{i,1}^{(t)})^T))\geq \frac{\mathbf{tr}(g_{i,1}^{(t)}(g_{i,1}^{(t)})^T)}{\displaystyle\max_{j=1,...,K}\lambda_{1,j}}\geq 0,
\end{equation}
\begin{equation}
    \mathbf{tr}(\Sigma_\alpha^{-1}(g_{i,1}^{(t)}(g_{i,1}^{(t)})^T))\leq \frac{\mathbf{tr}(g_{i,1}^{(t)}(g_{i,1}^{(t)})^T)}{\displaystyle\min_{j=1,...,K}\lambda_{\alpha,j}}\leq \frac{C^2}{\displaystyle\min_{j=1,...,K}\lambda_{\alpha,j}},
\end{equation}
\begin{equation}
    \mathbf{tr}(\Sigma_\alpha^{-1}(g_{i,1}^{(t)'}(g_{i,1}^{(t)'})^T))\geq \frac{\mathbf{tr}(g_{i,1}^{(t)'}(g_{i,1}^{(t)'})^T)}{\displaystyle\max_{j=1,...,K}\lambda_{\alpha,j}}\geq 0,
\end{equation}
Therefore, we can derive:
\begin{equation}
\label{eq:rd_bound2_1}
\begin{aligned}
&\ln\frac{|\Sigma_1|^{1-\alpha}|\Sigma_2|^{\alpha}}{|\Sigma_\alpha|} \leq\frac{\alpha C^2}{BD}(\frac{1}{\displaystyle\min_{j=1,...,K}\lambda_{1,j}} + \frac{1}{\displaystyle\min_{j=1,...,K}\lambda_{\alpha,j}})\\
=&\frac{\alpha C^2}{BD}(\frac{1}{\frac{\sum_{i=1}^{N}\lambda_{i,min}^{(t)}}{B}} + \frac{1}{\frac{\sum_{i=1}^{N}\lambda_{i,min}^{(t)}}{B}-\frac{\alpha C^2}{BD}})
\end{aligned}
\end{equation}
Combing Eq. (\ref{eq:rd_bound1_1}) and Eq. (\ref{eq:rd_bound2_1}), we have:
\begin{equation}
\label{eq:rd_bound}
\begin{aligned}
&D_{\alpha}(\mathcal{M}_{\{D_i\cup D_{-i},\theta^{(t)}\}} \| \mathcal{M}_{\{D_i^{'}\cup D_{-i},\theta^{(t)}\}}) \\
\leq &\frac{2\alpha C^2}{D^2(\frac{\sum_{i=1}^{N}\lambda_{i,min}^{(t)}}{B}-\frac{\alpha C^2}{BD})}+\\
&\frac{\alpha C^2}{2BD(\alpha-1)}(\frac{1}{\frac{\sum_{i=1}^{N}\lambda_{i,min}^{(t)}}{B}} + \frac{1}{\frac{\sum_{i=1}^{N}\lambda_{i,min}^{(t)}}{B}-\frac{\alpha C^2}{BD}}) \\
\leq & (\frac{2\alpha BC^2}{D^2}+\frac{\alpha C^2}{(\alpha-1)D})\frac{1}{\sum_{i=1}^{N}\lambda_{i,min}^{(t)}-\frac{\alpha C^2}{D}}=\epsilon_i^{(t)}.
\end{aligned}
\end{equation}
Hence, the aggregated model update $x^{(t)}$ provides $(\alpha, \epsilon_i^{(t)})$-RDP for user $i$'s dataset $D_i$. 
Lastly, based on Lemma \ref{lemma1}, it can be derived that the aggregated model update $x^{(t)}$ provides $(\epsilon_i^{(t)}+\frac{\log(1/\delta)}{\alpha-1}, \delta)$-DP for user $i$'s dataset $D_i$.
\end{proof}


Theorem \ref{theorem1} indicates that the privacy bound $\epsilon$ in FL with SA mainly depends on two main factors: 1) the number of users participating in FL with SA, and 2) the minimal eigenvalues of the model update covariance matrix from each user. As we increase the number of users $N$, $\sum_{i=1}^{N}\lambda_{i,min}^{(t)}$ will increase and hence the $\epsilon$ will decay. Moreover, when $\lambda_{j,min}^{(t)}$, i.e., the minimal eigenvalues of any user $j$'s covariance matrix increase, $\sum_{i=1}^{N}\lambda_{i,min}^{(t)}$ will also increase and thus $\epsilon$ will decrease.  

However, it is worth noting that Theorem \ref{theorem1} relies on the non-singular covariance matrix assumption (i.e. Assumption \ref{assum3}), which may not easily hold in practice. Especially in applications using deep neural networks, the number of training data points is typically smaller than the number of model parameters (i.e. over-parameterization), and as a result the covariance matrix of the model update will be singular \cite{du2018gradient, allen2019convergence}. 
Motivated by this, we next consider the case where the sampling noise of each user is Gaussian and the gradient covariance matrix is singular.

\subsection{Gaussian sampling noise with singular gradient covariance matrix}
\label{subsec:gaussian_singular}
Before presenting our second theorem, we formally make the following assumption:
\begin{assump}[Non-singular covariance matrix in subspace] Assume that at step $t$, 
for any user $i$, the gradient $g_{i,j}^{(t)}$ calculated by any data point $d_{i,j}$ can be mapped into a subspace as $g_{i,j}^{(t)}=S^{(t)}g_{i,j}^{*(t)}$, where $S^{(t)}\in\mathbf{R}^{K\times K*}$ ($K> K^*)$ is a matrix consisting of $K^*$ orthogonal unit vector. We further assume that after being mapped to the subspace, the covariance matrix of gradient noise is $\Sigma_i^{*(t)}=\sum_{j=1}^{j=D}g_{i,j}^{*(t)}(g_{i,j}^{*(t)})^T$, which is non-singular and has non-zero minimal eigenvalue of $\lambda_{i,min}^{*(t)}>0$.
\label{assum4}
\end{assump}

\begin{thm}
Under Assumption \ref{assum1},\ref{assum2},\ref{assum4}, if $\frac{C^2}{D} < \sum_{i=1}^{N}\lambda_{i,min}^{(t)}$, then $\forall~\alpha\in (1, \frac{D\sum_{i=1}^{N}\lambda_{i,min}^{*(t)}}{C^2})$, the aggregated model update $x^{(t)}$ at step $t$ can provide $(\alpha,\epsilon_i^{*(t)})$-RDP and $(\epsilon_i^{*(t)}+\frac{\log(1/\delta)}{\alpha-1},\delta)$-DP for user $i$'s dataset $D_i$, where:
\begin{equation}
\label{eq:lambda1}
\epsilon_i^{*(t)} =
    (\frac{2\alpha BC^2}{D^2}+\frac{\alpha C^2}{(\alpha-1)D})\frac{1}{\sum_{i=1}^{N}\lambda_{i,min}^{*(t)}-\frac{\alpha C^2}{D}}.
\end{equation}
\label{theorem2}
\end{thm}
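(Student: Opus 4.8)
The plan is to reduce Theorem~\ref{theorem2} to Theorem~\ref{theorem1} by running the whole argument inside the $K^*$-dimensional subspace $\mathrm{range}(S^{(t)})$ isolated by Assumption~\ref{assum4}. First I would set $G_i^{*(t)}=\begin{bmatrix} g_{i,1}^{*(t)} & \cdots & g_{i,D}^{*(t)}\end{bmatrix}$ and note that $g_{i,j}^{(t)}=S^{(t)}g_{i,j}^{*(t)}$ gives $G_i^{(t)}=S^{(t)}G_i^{*(t)}$ and $\bar g_i^{(t)}=S^{(t)}\bar g_i^{*(t)}$ with $\bar g_i^{*(t)}:=\tfrac1D\sum_{j}g_{i,j}^{*(t)}$. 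Hence the Gaussian-sampling update of Eq.~(\ref{eq:gaussin_gradient}) factors through $S^{(t)}$: $x_i^{(t)}=S^{(t)}\bigl(\bar g_i^{*(t)}+\tfrac{1}{\sqrt{BD}}G_i^{*(t)}\xi_i\bigr)$, which we abbreviate $S^{(t)}x_i^{*(t)}$, so that $x^{(t)}=S^{(t)}x^{*(t)}$ with $x^{*(t)}=\sum_{j=1}^{N}x_j^{*(t)}$. Since the $\xi_j$ are independent standard normals, $x^{*(t)}$ is a Gaussian vector on $\mathbf{R}^{K^*}$ whose mean and covariance are assembled from $\{\bar g_j^{*(t)}\}_j$ and $\{\Sigma_j^{*(t)}\}_j$ exactly as in Section~\ref{subsec:gaussian_non_singular}.

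The second step removes $S^{(t)}$ from the divergence. Because its columns are orthonormal, $(S^{(t)})^{T}S^{(t)}=I_{K^*}$, so $S^{(t)}$ is injective and $x^{*(t)}=(S^{(t)})^{T}x^{(t)}$ is a deterministic, data-independent post-processing of $x^{(t)}$. Applying the data-processing inequality for Rényi divergence once to the map $x\mapsto S^{(t)}x$ and once to $x\mapsto (S^{(t)})^{T}x$ yields the two opposite bounds, hence the exact identity
\[
D_{\alpha}(\mathcal{M}_{\{D_i\cup D_{-i},\theta^{(t)}\}} \| \mathcal{M}_{\{D_i^{'}\cup D_{-i},\theta^{(t)}\}}) = D_{\alpha}(\mathcal{N}^{*} \| \widetilde{\mathcal{N}}^{*}),
\]
where $\mathcal{N}^{*}$ and $\widetilde{\mathcal{N}}^{*}$ are the laws of $x^{*(t)}$ when user $i$ holds $D_i$ and $D_i'$ respectively — two Gaussians on $\mathbf{R}^{K^*}$ that differ only through $\bar g_i^{*(t)}$ versus $\bar g_i^{*(t)\prime}$ in the mean and $\Sigma_i^{*(t)}$ versus $\Sigma_i^{*(t)\prime}$ in the covariance. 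It is worth remarking here that both $\mathcal{M}$'s are supported on $\mathrm{range}(S^{(t)})=\mathrm{span}\bigl(\bigcup_{j=1}^{N}\{g_{j,k}^{(t)}\}_{k}\bigr)$ (using non-singularity of each $\Sigma_j^{*(t)}$), so the right-hand divergence is finite even though the ambient covariance on $\mathbf{R}^{K}$ is singular; the obstruction of Theorem~\ref{theorem_base} does not arise precisely because, by Assumption~\ref{assum4}, the perturbed gradient $g_{i,1}^{(t)\prime}$ still lies in this span.

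It then remains to verify that the projected problem satisfies the hypotheses of Theorem~\ref{theorem1} with $(\mathbf{R}^{K^*},\Sigma_i^{*(t)},\lambda_{i,min}^{*(t)})$ in place of $(\mathbf{R}^{K},\Sigma_i^{(t)},\lambda_{i,min}^{(t)})$. Since $S^{(t)}$ is an isometry onto its range, $\|g_{i,j}^{*(t)}\|_2=\|g_{i,j}^{(t)}\|_2\le C$, so the analogue of Lemma~\ref{lemma2} holds (largest eigenvalue of $\Sigma_i^{*(t)}$ at most $C^2$) and $\|\bar g_i^{*(t)}-\bar g_i^{*(t)\prime}\|_2^2=\tfrac1{D^2}\|g_{i,1}^{*(t)}-g_{i,1}^{*(t)\prime}\|_2^2\le\tfrac{4C^2}{D^2}$; Assumption~\ref{assum4} provides the replacement for Assumption~\ref{assum3}, namely $\lambda_{i,min}^{*(t)}>0$. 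With these substitutions the proof of Theorem~\ref{theorem1} runs verbatim: the Rényi divergence between the two projected Gaussians splits into Term~I and Term~II via Table~2 of \cite{gil2013renyi}, the smallest eigenvalue of $\Sigma_\alpha$ is lower-bounded by $\tfrac1B\bigl(\sum_{i}\lambda_{i,min}^{*(t)}-\tfrac{\alpha C^2}{D}\bigr)$ whenever $\alpha C^2/D<\sum_i\lambda_{i,min}^{*(t)}$, and Term~II is controlled with the von Neumann trace inequality, yielding $D_\alpha\le\epsilon_i^{*(t)}$ for every $\alpha\in\bigl(1,\tfrac{D\sum_i\lambda_{i,min}^{*(t)}}{C^2}\bigr)$. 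This is $(\alpha,\epsilon_i^{*(t)})$-RDP, and Lemma~\ref{lemma1} converts it into $(\epsilon_i^{*(t)}+\tfrac{\log(1/\delta)}{\alpha-1},\delta)$-DP.

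The only step that needs genuine care is the first reduction: arguing that composing with the fixed map $S^{(t)}$ leaves the Rényi divergence \emph{unchanged} rather than merely non-increasing. This rests on two facts — $S^{(t)}$ is chosen independently of the data, so both conditional laws are pushed forward by the same map, and $S^{(t)}$ is injective, so the push-forward loses no information (its left inverse is $(S^{(t)})^{T}$). Everything downstream is then a mechanical re-run of the computations in the proof of Theorem~\ref{theorem1}, with each occurrence of $\Sigma_i^{(t)}$, $\lambda_{i,min}^{(t)}$ and $\mathbf{R}^{K}$ replaced by its starred counterpart.
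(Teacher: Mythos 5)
Your proposal is correct and follows essentially the same route as the paper: factor the aggregate update through the fixed orthonormal map $S^{(t)}$, reduce to the non-singular Gaussian setting on $\mathbf{R}^{K^*}$, and rerun the Theorem~\ref{theorem1} computation with the starred quantities. You actually justify two steps the paper leaves implicit — that the Rényi divergence is exactly preserved under the injective, data-independent map $S^{(t)}$ (via data processing in both directions), and that the clipping bound $\|g_{i,j}^{*(t)}\|_2\le C$ transfers because $S^{(t)}$ is an isometry — so no further changes are needed.
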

\begin{proof}
We define $G_{i}^{*(t)}=[g_{i,1}^{*(t)},...,g_{i,D}^{*(t)}]$. Based on Assumption \ref{assum4}, we have:

\begin{equation}
\label{eq:global_subspace}
\begin{aligned}
    x^{(t)}=&x_{i}^{(t)}+ x_{-i}^{(t)}=
    \sum_{i=1}^{N}G_{i}^{(t)}w_j = S^{(t)}\Big(\sum_{i=1}^{N}G_{i}^{*(t)}w_j\Big)\\
    =&S^{(t)}\Big(\sum_{i=1}^{N}\bar g_{i}^{*(t)}+ \sum_{i=1}^{N}\frac{1}{\sqrt{B}}L_{i}^{*(t)}v_i^*\Big)=S^{(t)}x^{*(t)},
\end{aligned}
\end{equation}
where $\bar g_{i}^{*(t)}=\frac{1}{D}\sum_{j=1}^{D}g_{i,j}^{*(t)}$ and $\Sigma_i^{*(t)}=U_{i}^{*(t)}\big(\lambda_{i}^{*(t)}\big)\big(U_{i}^{*(t)}\big)^T$, $L_{i}^{(t)}=U_{i}^{(t)}\big(\lambda_{i}^{(t)}\big)^{\frac{1}{2}}$, $v_i^*\sim \mathcal{N}(0, I_{K^*})$, and  $x^{*(t)}$ will be Gaussian with mean $\sum_{j=1}^{N}\bar g_{j}^{*(t)}$ and covariance matrix $\frac{\sum_{j=1}^{N}\Sigma_j^{*(t)}}{B}$.
Therefore, $\forall~x$ in the subspace of $S^{(t)}$ (i.e. $x=S^{(t)}x^*$), we have:
\begin{equation}
\begin{aligned}
        &\mathcal{M}_{\{D_i\cup D_{-i},\theta^{(t)}\}}(x) = \mathtt{Pr}[x_{i}^{(t)}+ x_{-i}^{(t)}=x|\{D_i\}_{i=1}^{i=N},\theta^{(t)}]\\
    = &\mathtt{Pr}[x^{*(t)}=x^*|\{D_i\}_{i=1}^{i=N},\theta^{(t)}].
\end{aligned}
\end{equation}
Following the proof of Theorem \ref{theorem1}, for any $D_i$ and $D_i^{'}$ (i.e. two instances of user $i$'s local dataset ) differing from one data point, we have:
\begin{equation}
\label{eq:rd_bound_2}
\begin{aligned}
&D_{\alpha}(\mathcal{M}_{\{D_i\cup D_{-i},\theta^{(t)}\}} \| \mathcal{M}_{\{D_i^{'}\cup D_{-i},\theta^{(t)}\}}) \\
\leq & (\frac{2\alpha BC^2}{D^2}+\frac{\alpha C^2}{(\alpha-1)D})\frac{1}{\sum_{i=1}^{N}\lambda_{i,min}^{*(t)}-\frac{\alpha C^2}{D}}=\epsilon_i^{*(t)}.
\end{aligned}
\end{equation}
Hence, the aggregated model update $x^{(t)}$ at step $t$ can provide $(\alpha,\epsilon_i^{*(t)})$-RDP and $(\epsilon_i^{*(t)}+\frac{\log(1/\delta)}{\alpha-1},\delta)$.
\end{proof}

Theorem \ref{theorem2} relies on Assumption \ref{assum4}. It states that for Gaussian gradient noise with singular covariance matrix, in order to provide DP guarantee for user $i$'s local dataset, any difference in aggregated model update caused by the change of one data point in must belong to the same subspace where the randomness in the aggregated model update belongs to. Otherwise, DP will be violated. The counter example in Fig.~\ref{fig:counter_example} violates DP since the model update of user 4 fails to satisfy Assumption \ref{assum4}. 

\noindent\textbf{Remark 2: Applicability of Assumption \ref{assum4} in practice.} Prior works have demonstrated that the gradient converges to a tiny subspace \cite{gur2018gradient} in the training process of overparameterized neural networks, and mapping gradient into a subspace (e.g. using PCA) can reduce the amount of noise added for the same level of DP \cite{abadi2016deep}. Hence, Assumption \ref{assum4} is a reasonable assumption for overparameterized neural network.
However, in order to verify whether Assumption \ref{assum4} holds, a centralized server is needed to access to each user's individual gradients and run SVD on them, which breaks the privacy guarantee provided by SA in FL (see Section \ref{subsection:sa_guarantee}). Therefore, in practice, a potential solution is that each user adds additional noise locally to make the covariance matrix non-singular, as discussed in Section \ref{sec:wf_noise}.




\section{Water-filling noise addition (WF-NA)  algorithm.}
\label{sec:wf_noise}
Till now, our theoretical results demonstrate the necessity of additional noise for DP guarantee in FL with SA when the covariance matrix of gradient noise is non-singular. In this section, we explore whether it is possible to leverage the inherent randomness inside aggregated model update to reduce the amount of additional noise required for DP. Specifically, we introduce a novel algorithm called Water-Filling noise addition (WF-NA), which lift the zero eigenvalues (and some small eigenvalues) in the covariance matrix of each user's gradient noise, in order to guarantee their non-singularity. For simplicity, we refer to this algorithm as WF-NA and describe its details below.

\subsection{Algorithm design}
The inputs of the WF-NA algorithm for each user $i$ include the local dataset $D_i$, the current global model parameters $\theta^{(t)}$, the number of users $N$ participating in FL, the clipping value $C$, mini-batch size $B$, the lower bound for minimal eigenvalue budget $\sigma^2$, and the relaxation item $\delta$. Note that without loss of generality, we assume that the input parameters are the same across users.

First, user $i$ utilizes its local dataset $D_i$ to compute the mean gradient and covariance matrix of gradient noise as $\mu_i^{(t)}$ and $\Sigma_{i}^{(t)}$ respectively. We describe this process for FedSGD separately below.

\noindent\textbf{Calculate covariance matrix.} For each data point in the local training dataset, user $i$ will calculate a gradient from this data point and clip its $L_2$ norm to make it upper bounded by constant $C$. 
After obtaining $|D_i|=D$ clipped gradient vectors, user $i$ uses them to calculate the mean $\mu_i^{(t)}=\frac{1}{D}\sum_{j=1}^{D}g_{i,j}^{(t)}$ and covariance matrix $\Sigma_{i}^{(t)}=\frac{1}{BD}\sum_{j=1}^{D}g_{i,j}^{(t)}(g_{i,j}^{(t)})^T$. 


\noindent\textbf{Run SVD and lower bound the smallest eigenvalue.} Next, user $i$ will run SVD on its estimate covariance matrix as $\Sigma_i^{(t)}=U_i^{(t)}\Lambda_i^{(t)}(U_i^{(t)})^{T}$, and clip the eigenvalues in $\Lambda_i^{(t)}$ such that all eigenvalues are lower bounded by $\sigma^2$. Define the updated diagonal matrix with bounded eigenvalues as $\Lambda_{i,+}^{(t)}$. The revised covariance matrix can be calculated as $\Sigma_{i,+}^{(t)}=U_i^{(t)}\Lambda_{i,+}^{(t)}(U_i^{(t)})^{T}$. Note that to make the SVD process efficient, we split the $d$-dimension gradient into $k$ parts, estimate the covariance matrix for each part, run SVD on these $k$ covariance matrices separately, and concatenate them into the final covariance matrix. By running SVD approximately, the time complexity of SVD will be reduced from $O(d^3)$ to $O(k\frac{d}{k}^3)=O(\frac{d^3}{k^2})$ (see similar approaches in \cite{liu2014additive, delattre2012blockwise,tamilmathi2022tensor}), without affecting the $\epsilon$ bound we have.

\noindent\textbf{Add WF noise.} Finally, each user adds Gaussian noise with covariance matrix $\Delta\Sigma_{i}^{(t)}=\Sigma_{i,+}^{(t)} - \Sigma_{i}^{(t)}$ into the local model update (i.e. $n\sim N(0,\Delta\Sigma_{i}^{(t)})$).

\begin{thm} [DP guarantees of Gaussian sampling noise + WF-NA algorithm]
\label{theorem_wf}
Given the input parameters of WF-NA algorithm, for any $\alpha\in(1,\frac{N\sigma^2D}{2C^2})$, the aggregated model update using WF-NA can provide $(\alpha,\epsilon^{(t)})$-RDP and $(\epsilon^{(t)}+\frac{\log(1/\delta)}{\alpha-1},\delta)$-DP guarantees to any user $i$'s local dataset, where the $\epsilon^{(t)}$ of each training round is given as follows:
\begin{equation}
\label{eq:wfdp}
\epsilon^{(t)} =
    (\frac{2\alpha C^2}{D^2}+\frac{2\alpha C^2}{(\alpha-1)BD})\frac{1}{N\sigma^2-\frac{2\alpha C^2}{BD}}.
\end{equation}
\end{thm}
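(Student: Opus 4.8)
The plan is to reduce the claim to a Rényi‑divergence computation between two multivariate Gaussians, exactly as in the proof of Theorem~\ref{theorem1}, after observing that WF‑NA forces every user's effective update covariance to be bounded below by $\sigma^2 I$ regardless of the data.

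First I would pin down the output distributions. Under Assumption~\ref{assum2}, user $i$'s update is $x_i^{(t)}\sim\mathcal{N}(\bar g_i^{(t)},\Sigma_i^{(t)})$; adding the independent WF noise $n_i\sim\mathcal{N}(0,\Delta\Sigma_i^{(t)})$ with $\Delta\Sigma_i^{(t)}=\Sigma_{i,+}^{(t)}-\Sigma_i^{(t)}\succeq 0$ gives $x_i^{(t)}+n_i\sim\mathcal{N}(\bar g_i^{(t)},\Sigma_{i,+}^{(t)})$, where the covariances already absorb the $1/B$ factor. Since the $N$ per‑user updates and their WF noises are mutually independent, the server observes $\mathcal{N}\big(\sum_j \bar g_j^{(t)},\ \sum_j\Sigma_{j,+}^{(t)}\big)$, and eigenvalue flooring guarantees $\Sigma_{j,+}^{(t)}\succeq\sigma^2 I$ for every $j$ and every dataset, so $\sum_j\Sigma_{j,+}^{(t)}\succeq N\sigma^2 I$. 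When user $i$ replaces one data point ($D_i\to D_i'$), only its own mean and covariance change, so the two adjacent output laws are $\mathcal{N}\big(\bar g_i^{(t)}+\bar g_{-i}^{(t)},\,\Sigma_1\big)$ and $\mathcal{N}\big(\bar g_i^{(t)'}+\bar g_{-i}^{(t)},\,\Sigma_2\big)$ with $\Sigma_1=\Sigma_{i,+}^{(t)}+\Sigma_{-i,+}^{(t)}$, $\Sigma_2=\Sigma_{i,+}^{(t)'}+\Sigma_{-i,+}^{(t)}$, and $\Sigma_{-i,+}^{(t)}=\sum_{j\ne i}\Sigma_{j,+}^{(t)}$ common to both.

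Then I would apply the closed‑form Rényi divergence between Gaussians (Table~2 in \cite{gil2013renyi}) and split it into the mean term $\tfrac{\alpha}{2}(\Delta\bar g_i^{(t)})^T\Sigma_\alpha^{-1}\Delta\bar g_i^{(t)}$ and the log‑determinant term $-\tfrac{1}{2(\alpha-1)}\ln\tfrac{|\Sigma_\alpha|}{|\Sigma_1|^{1-\alpha}|\Sigma_2|^{\alpha}}$, with $\Sigma_\alpha=(1-\alpha)\Sigma_1+\alpha\Sigma_2$. The mean shift is $\Delta\bar g_i^{(t)}=\tfrac{1}{D}(g_{i,1}^{(t)}-g_{i,1}^{(t)'})$, so $\|\Delta\bar g_i^{(t)}\|_2\le 2C/D$ by clipping; the raw covariance changes by $\tfrac{1}{BD}\big(g_{i,1}^{(t)'}(g_{i,1}^{(t)'})^T-g_{i,1}^{(t)}(g_{i,1}^{(t)})^T\big)$, of operator norm at most $2C^2/(BD)$, and passing this through the SVD‑and‑clip map gives $\|\Sigma_{i,+}^{(t)'}-\Sigma_{i,+}^{(t)}\|\le 2C^2/(BD)$; therefore $\Sigma_1,\Sigma_2,\Sigma_\alpha$ all have smallest eigenvalue at least $N\sigma^2-\tfrac{2\alpha C^2}{BD}$, which is positive on the stated range $\alpha\in(1,\tfrac{N\sigma^2 D}{2C^2})$. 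Bounding the mean term by $\tfrac{\alpha}{2}\cdot\tfrac{4C^2}{D^2}\cdot\tfrac{1}{N\sigma^2-2\alpha C^2/(BD)}$, and the log‑determinant term via concavity of $\ln|\cdot|$ together with von Neumann's trace inequality \cite{mirsky1975trace} (the Term~II argument of Theorem~\ref{theorem1}, now with perturbation norm $2C^2/(BD)$ and eigenvalue floor $N\sigma^2$), yields $D_\alpha\le\epsilon^{(t)}$. Finally Lemma~\ref{lemma1} turns the $(\alpha,\epsilon^{(t)})$‑RDP bound into $(\epsilon^{(t)}+\log(1/\delta)/(\alpha-1),\delta)$‑DP; this is uniform over users and over all datasets, since the clip bound $\|g\|\le C$ holds by construction and $\Sigma_{j,+}^{(t)}\succeq\sigma^2 I$ holds by the eigenvalue flooring, neither depending on the data.

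I expect the delicate step to be the inequality $\|\Sigma_{i,+}^{(t)'}-\Sigma_{i,+}^{(t)}\|\le\|\Sigma_i^{(t)'}-\Sigma_i^{(t)}\|$: this asks that the eigenvalue‑flooring map $\Sigma\mapsto U\max(\Lambda,\sigma^2)U^T$ not amplify a rank‑$\le 2$ perturbation through the SVD, i.e. that it be non‑expansive (or at least Lipschitz with a constant close to $1$) on Hermitian matrices, which is the only place one must reason about how the eigenvectors, not just the eigenvalues, move. A related subtlety is that, unlike in Theorem~\ref{theorem1}, after flooring the covariance perturbation is no longer the difference of one fixed PSD matrix and another with a clean one‑sided bound, which is why the factor $2C^2/(BD)$ rather than $C^2/(BD)$ surfaces in the final expression and why the Term~II trace bounds must be re‑derived from the two‑sided operator‑norm estimate; modulo these points, the argument is a routine re‑run of the Theorem~\ref{theorem1} algebra with $\sum_i\lambda_{i,min}^{(t)}$ replaced by $N\sigma^2$ and $B$ folded into the covariances.
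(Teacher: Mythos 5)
Your overall architecture matches the paper's: reduce to the closed-form R\'enyi divergence between two Gaussians with aggregate covariance floored at $N\sigma^2 I$, split into the mean term and the log-determinant term, and rerun the Theorem~\ref{theorem1} algebra with $\sum_i\lambda_{i,min}^{(t)}$ replaced by $N\sigma^2$ and the perturbation magnitude doubled. The place where you diverge is exactly the step you flag as delicate, and it is a genuine gap rather than a technicality. You propose to control $\Sigma_{i,+}^{(t)'}-\Sigma_{i,+}^{(t)}$ by arguing that the eigenvalue-flooring map $\Sigma\mapsto U\max(\Lambda,\sigma^2)U^T$ is non-expansive in operator norm. As stated this is false in general: flooring is $\sigma^2 I+(\Sigma-\sigma^2 I)_+$, and the positive-part (equivalently, absolute-value) function is not operator Lipschitz with constant $1$ on Hermitian matrices --- the best constant grows with the dimension --- so a rank-$2$ perturbation of the raw covariance can in principle be amplified after flooring. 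The paper does not go through any Lipschitz property of the map; instead it reasons directly about the added WF covariance $\Sigma_{i,\Delta}^{(t)}=\Sigma_{i,+}^{(t)}-\Sigma_{i}^{(t)}$ and sandwiches its change in the Loewner order, $-\frac{1}{BD}g_{i,1}^{(t),2}(g_{i,1}^{(t),2})^T\preceq\Delta\Sigma_{i,\Delta}^{(t)}\preceq\frac{1}{BD}g_{i,1}^{(t),1}(g_{i,1}^{(t),1})^T$, which immediately yields both the eigenvalue floor $N\sigma^2-\frac{2\alpha C^2}{BD}$ for $\Sigma_{\alpha,+}$ and the domination $\Sigma_{2,+}\preceq\Sigma_{1,+}+\frac{1}{BD}\big(g_{i,1}^{(t),2}(g_{i,1}^{(t),2})^T+g_{i,1}^{(t),1}(g_{i,1}^{(t),1})^T\big)$.

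The second, more consequential problem is that even if your operator-norm bound held, it would not recover the paper's Term~II estimate. The von Neumann trace argument needs the covariance perturbation to be dominated (in the Loewner order) by a low-rank positive semidefinite matrix of trace at most $O(C^2/(BD))$; that is what makes $\mathbf{tr}\big(\Sigma_{1,+}^{-1}(\Sigma_{2,+}-\Sigma_{1,+})\big)\le \frac{2C^2}{BD\,\lambda_{\min}}$ dimension-free. From an operator-norm bound $\|\Sigma_{2,+}-\Sigma_{1,+}\|\le\frac{2C^2}{BD}$ alone, the best you get is $\mathbf{tr}\big(\Sigma_{1,+}^{-1}(\Sigma_{2,+}-\Sigma_{1,+})\big)\le\frac{2C^2}{BD}\,\mathbf{tr}(\Sigma_{1,+}^{-1})$, which carries an extra factor of the model dimension $K$ and destroys the claimed $\epsilon^{(t)}$. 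So to complete your proof you would need to establish the Loewner sandwich (or at least that the floored perturbation remains dominated by rank-$\le 2$ matrices of trace $\le 2C^2/(BD)$), which is precisely the content of the paper's argument about how WF-NA refills eigenvalues; the operator-norm route cannot substitute for it.
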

\begin{proof}
We define the change of covariance matrix after applying WF-NA as: 
\begin{equation}
    \Sigma_{i,\Delta}^{(t)} = \Sigma_{i,+}^{(t)} - \Sigma_{i}^{(t)}.
\end{equation}
Since WF-NA only increases the eigenvalues of the original covariance matrix by maximally $\sigma^2$,   $\Sigma_{i,\Delta}^{(t)}$ will be semi-definite with maximal eigenvalue upper bounded by $\sigma^2$. Then, consider $\Sigma_{i,+}^{(t),1}$ and $\Sigma_{i,+}^{(t),2}$, which are two instances of $\Sigma_{i,+}^{(t)}$ by changing one data point in $D_i$, we have:
\begin{equation}
\begin{aligned}
      \Sigma_{i,+}^{(t),2}-\Sigma_{i,+}^{(t),1}=&\frac{1}{BD}(g_{i,1}^{(t),2}(g_{i,1}^{(t),2})^T-g_{i,1}^{(t),1}(g_{i,1}^{(t),1})^T) + \Delta\Sigma_{i,\Delta}^{(t)},  
\end{aligned}
\end{equation}
where $\Delta\Sigma_{i,\Delta}^{(t)} = \Sigma_{i,\Delta}^{(t),2}-\Sigma_{i,\Delta}^{(t),1}$. 

Since removing $g_{i,1}^{(t),1}(g_{i,1}^{(t),1})^T$ decreases the eigenvalues, which may cause some eigenvalues below smaller than $\sigma^2$, WF-NA needs to refill eigenvalues smaller than $\sigma^2$. Hence, we have $\Delta\Sigma_{i,\Delta}^{(t)}\preceq \frac{1}{BD}g_{i,1}^{(t),1}(g_{i,1}^{(t),1})^T$. This says, by changing one data point in user $i$'s local dataset, WF-NA will at most increase $\Sigma_{i,+}^{(t),1}$ by $\frac{1}{BD}g_{i,1}^{(t),1}(g_{i,1}^{(t),1})^T$ to refill small eigenvalues. 
Moreover, since adding $g_{i,1}^{(t),2}(g_{i,1}^{(t),2})^T$ increases the eigenvalues, WF-NA may lift eigenvalues smaller than $\sigma^2$ less. Hence, we have $\Delta\Sigma_{i,\Delta}^{(t)}\succeq -\frac{1}{BD}g_{i,1}^{(t),2}(g_{i,1}^{(t),2})^T$. This says, by changing one data point in user $i$'s local dataset, WF-NA will at most reduce $\Sigma_{i,+}^{(t),1}$ by $\frac{1}{BD}g_{i,1}^{(t),2}(g_{i,1}^{(t),2})^T$. Formally, we have:
\begin{equation}
\begin{aligned}
      -\frac{1}{BD}g_{i,1}^{(t),2}(g_{i,1}^{(t),2})^T \preceq \Delta\Sigma_{i,\Delta}^{(t)} \preceq \frac{1}{BD}g_{i,1}^{(t),1}(g_{i,1}^{(t),1})^T
\end{aligned}
\end{equation}

Since $||g_{i,1}^{(t),l}||_2\leq C$, we can derive that $\forall~x\in\mathbf{R}^K$,
\begin{equation}
\begin{aligned}
      &x^T(\Sigma_{i,+}^{(t),2}-\Sigma_{i,+}^{(t),1})x\\
      \geq &-x^T(\frac{1}{BD}g_{i,1}^{(t),1}(g_{i,1}^{(t),1})^T+\frac{1}{BD}g_{i,1}^{(t),2}(g_{i,1}^{(t),2})^T)x\geq -\frac{2C^2}{BD}x^Tx.  
\end{aligned}
\end{equation}
Therefore, $\forall~x\in\mathbf{R}^K$, 
\begin{equation}
\begin{aligned}
&x^T\Sigma_{\alpha,+} x=x^T((1-\alpha)\Sigma_{1,+} + \alpha\Sigma_{2,+})x \\
=&x^T\Sigma_1 x + x^T\frac{\alpha}{B}(\Sigma_{i,+}^{(t),2}-\Sigma_{i,+}^{(t),1})x\\
\geq & (N\sigma^2-\frac{2\alpha C^2}{BD})x^Tx.
\end{aligned}
\end{equation}
Moreover, we can derive
\begin{equation}
\Sigma_{2,+}\preceq \Sigma_{1,+} + \frac{1}{BD}g_{i,1}^{(t),2}(g_{i,1}^{(t),2})^T +  \frac{1}{BD}g_{i,1}^{(t),1}(g_{i,1}^{(t),1})^T,
\end{equation}
and 
\begin{equation}
\Sigma_{1,+}\preceq \Sigma_{\alpha,+} + \frac{\alpha}{BD}g_{i,1}^{(t),2}(g_{i,1}^{(t),2})^T +  \frac{\alpha}{BD}g_{i,1}^{(t),1}(g_{i,1}^{(t),1})^T.
\end{equation}
Following the proof of Theorem \ref{theorem1}, we derive:
\begin{equation}
\label{eq:rd_bound_3}
\begin{aligned}
&D_{\alpha}(\mathcal{M}_{\{D_i^1\cup D_{-i},\theta^{(t)}\}} \| \mathcal{M}_{\{D_i^2\cup D_{-i},\theta^{(t)}\}}) \\
\leq & \frac{\alpha}{2}(\Delta \bar g_{i}^{(t)})^T \Sigma_{\alpha,+}^{-1}\Delta \bar g_{i}^{(t)}-\frac{1}{2(\alpha - 1)}\ln\frac{|\Sigma_{\alpha,+}|}{|\Sigma_{1,+}|^{1-\alpha}|\Sigma_{2,+}|^{\alpha}}.\\
\leq & \frac{2\alpha C^2}{D^2}\frac{1}{N\sigma^2-\frac{2\alpha C^2}{BD}} + \frac{\alpha C^2}{(\alpha-1)BD}(\frac{1}{N\sigma^2}+\frac{1}{N\sigma^2-\frac{2\alpha C^2}{BD}})\\
= & (\frac{2\alpha C^2}{D^2}+\frac{2\alpha C^2}{(\alpha-1)BD})\frac{1}{N\sigma^2-\frac{2\alpha C^2}{D}}=\epsilon^{(t)}.
\end{aligned}
\end{equation}
Hence, the aggregated model update using WF-NA can provide $(\alpha,\epsilon^{(t)})$-RDP and $(\epsilon^{(t)}+\frac{\log(1/\delta)}{\alpha-1},\delta)$-DP guarantees to any user $i$'s local dataset.
\end{proof}
    

\subsection{Discussion about WF-NA}
\label{subsec:intuitive_cmp}

\noindent\textbf{Intuitive comparison of WF noise with isotropic noise.}
\label{subsec:wf_comp}
Figure \ref{fig:noise_dp} compares WF noise with isotropic noise, where the $x$-axis represents the index of each eigenvalue and the $y$-axis represents the value of each eigenvalue. The noise contained in the aggregated model update is randomly distributed in the directions of all eigenvectors, and each eigenvalue measures the variance of the noise in each eigenvector direction. When the eigenvalue is larger than $\sigma^2$ (i.e. the amount of noise we need for achieving our privacy target), then the aggregated model update can already provide enough privacy protection, without needing additional noise. Since WF-NA leverages the noise contained in the aggregated model update, by only adding noise to the eigenvector directions where the eigenvalues are smaller than  $\sigma^2$, it can add less noise compared with adding isotropic noise, as shown in Figure \ref{fig:wf_dp} and \ref{fig:ldp}. 

\begin{figure}[t]
  \centering
  \begin{subfigure}{0.48\linewidth}
    \includegraphics[width=0.85\textwidth]{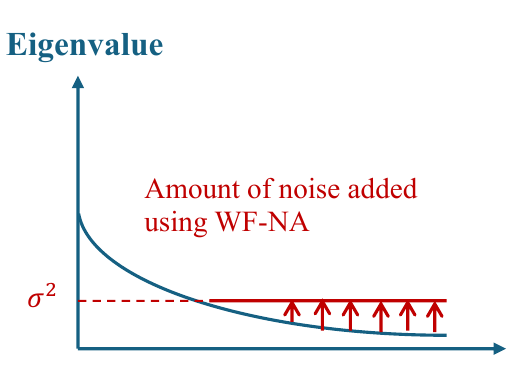}
    \caption{WF noise.}
    \label{fig:wf_dp}
  \end{subfigure}
  \hfill
  \begin{subfigure}{0.48\linewidth}
    \includegraphics[width=0.85\textwidth]{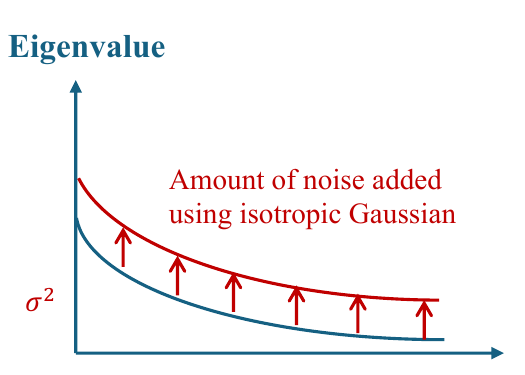}
    \caption{Isotropic noise.}
    \label{fig:ldp}
  \end{subfigure}
  \caption{Comparison of WF noise and isotropic noise.}
  \label{fig:noise_dp}
\end{figure}

\begin{table*}[!h]
\centering
\begin{tabular}{p{1.2in}<{\centering}p{1.2in}<{\centering}|p{3.8in}<{\centering}}
\hline
Method for local model update calculation & Additional noise & $(\epsilon,\delta)$-DP guarantee\\ \hline
SGD + IID gradient sampling  & Isotropic Gaussian & $\epsilon=\log(1+\frac{B}{D}(e^{\epsilon'}-1))$, where $\epsilon'=\displaystyle\min_{\alpha>1}\frac{\log(1/\delta)}{\alpha-1}+\frac{2\alpha C^2}{B^2}\cdot\frac{1}{N\sigma^2}$ (\cite{abadi2016deep,bun2018composable})\\
SGD + IID gradient sampling & WF-NA & No DP guarantee \\\hline
GD without sampling & Isotropic Gaussian & $\epsilon =\displaystyle\min_{\alpha>1}\frac{\log(1/\delta)}{\alpha-1}+\frac{2\alpha C^2}{D^2}\cdot\frac{1}{N\sigma^2}$ (\cite{abadi2016deep}) \\
GD without sampling & WF-NA & No DP guarantee\\\hline
SGD + Gaussian gradient sampling & Isotropic Gaussian & $\epsilon=\displaystyle\min_{\alpha>1}\frac{\log(1/\delta)}{\alpha-1}+(\frac{2\alpha C^2}{D^2}+\frac{\alpha C^2}{(\alpha-1)BD})\frac{1}{N\sigma^2-\frac{\alpha C^2}{BD}}$\\
SGD + Gaussian gradient sampling & WF-NA & $\epsilon=\displaystyle\min_{\alpha>1}\frac{\log(1/\delta)}{\alpha-1}+(\frac{2\alpha C^2}{D^2}+\frac{\alpha C^2}{(\alpha-1)BD})\frac{1}{N\sigma^2-\frac{\alpha C^2}{BD}}$\\\hline

\end{tabular}
\caption{Comparison of different DP mechanisms in FedSGD. Note that for SGD + IID gradient sampling, the sampling ratio is $\frac{B}{D}$, which will offer a privacy amplification with ratio $\frac{B}{D}$ approximately.}
\label{tab:tab_dp_cmp}
\end{table*}

\noindent\textbf{Comparison of different DP mechanisms.} In Table \ref{tab:tab_dp_cmp}, we summarize the DP guarantees of different sampling noise with WF-NA and compare them with the DP guarantees provided by traditional DP mechanisms using isotropic Gaussian noise.  It is worth noting that WF-NA itself is not enough to guarantee DP, since it only guarantees that the covariance matrix of gradient noise is non-singular. In addition, it needs to be combined with appropriate sampling noise in order to achieve DP guarantee (e.g. Gaussian sampling noise). 

As shown in Table \ref{tab:tab_dp_cmp}, when SGD with IID sampling or Gradient Descent (GD) without gradient sampling is used, the usage of WF-NA cannot provide any DP guarantee, since WF-NA does not guarantee the necessary condition in Theorem \ref{theorem1} (see Section \ref{subsec:gaussian_non_singular} for details). Moreover, compared with mechanisms using SGD, the usage of GD and isotropic Gaussian noise injects the least amount of noise into the local model updates, since GD does not introduce any sampling noise caused by gradient sampling, but it does have larger cost than SGD with IID sampling since it uses all data points. Last, when SGD is used, while WF-NA injects less noise compared with the addition of isotropic Gaussian, Gaussian gradient sampling may inject more sampling noise compared with IID gradient sampling. Hence, the total amount of noise added by Gaussian gradient sampling + WF-NA may not be smaller than the total amount of noise added by IID gradient sampling + isotropic Gaussian. In the next subsection, we empirically compare the performance of these mechanisms for the same level of privacy.

\subsection{Comparison of DP mechanisms}
\label{subsec:simulation}
In this subsection, we consider the tasks of training both a linear model and a Convolutional Neural Network (CNN) model on MNIST dataset with 50 users. We compare the accuracy of these models when different DP mechanisms are employed during the training process, to verify our theoretical analysis in Section \ref{subsec:wf_comp}. Specifically, we compare three mechanisms, described as follows.\\
\noindent\textbf{SGD-DP:} It uses SGD with IID sampling and the addition of isotropic Gaussian (i.e. adding noise in the whole gradient space). \\
\noindent\textbf{GSGD-DP:} It uses SGD with Gaussian sampling and WF-NA (i.e. adding noise to the ``unprotected'' gradient subspace). \\
\noindent\textbf{GD-DP:} It uses GD without any sampling noise and the addition of isotropic Gaussian (i.e. adding noise in the whole gradient space).

As demonstrated in Figure \ref{fig:mnist}, GSGD-DP has significantly better accuracy than SGD-DP. However, both of them exhibit lower accuracy compared with GD-DP. In terms of training efficiency, SGD-DP is better than GD-DP and GSGD-DP, since it merely requires users to sample a mini-batch of $B$ data points and take the average of gradients calculated from these data points as model update. Moreover, GSGD-DP is the least efficient compared to GD-DP and SGD-DP, due to the necessity for WF-NA to run SVD on the covariance matrix of gradient noise (see Section \ref{sec:wf_noise} for details). Therefore, we conclude that GD-DP is preferable for maximizing model accuracy without considering training efficiency. Conversely, SGD-DP is recommended when training efficiency is prioritized. Note that when the sampling ratio in SGD (i.e. $\frac{B}{D}$) converges to 1, the performance of SGD-DP will also converge to GD-DP. This says, the key difference between SGD-DP and GD-DP is how they trade between training efficiency and model accuracy for the same level of privacy. 
As indicated in \cite{cheng2019towards}, the mini-batch size $B$ in SGD-DP can be adjusted to trade between training efficiency and training accuracy. It is noteworthy that while GSGD-DP appears to be impractical due to the requirement to run SVD, investigating how to add minimal noise only to the model update subspace requiring additional noise for a DP guarantee is a compelling research direction. We leave it as future research to explore whether this idea can be practically applied (see Section \ref{sec:discussion} for details).

\begin{figure}
  \centering
  \begin{subfigure}{0.48\linewidth}
    \includegraphics[width=0.9\textwidth]{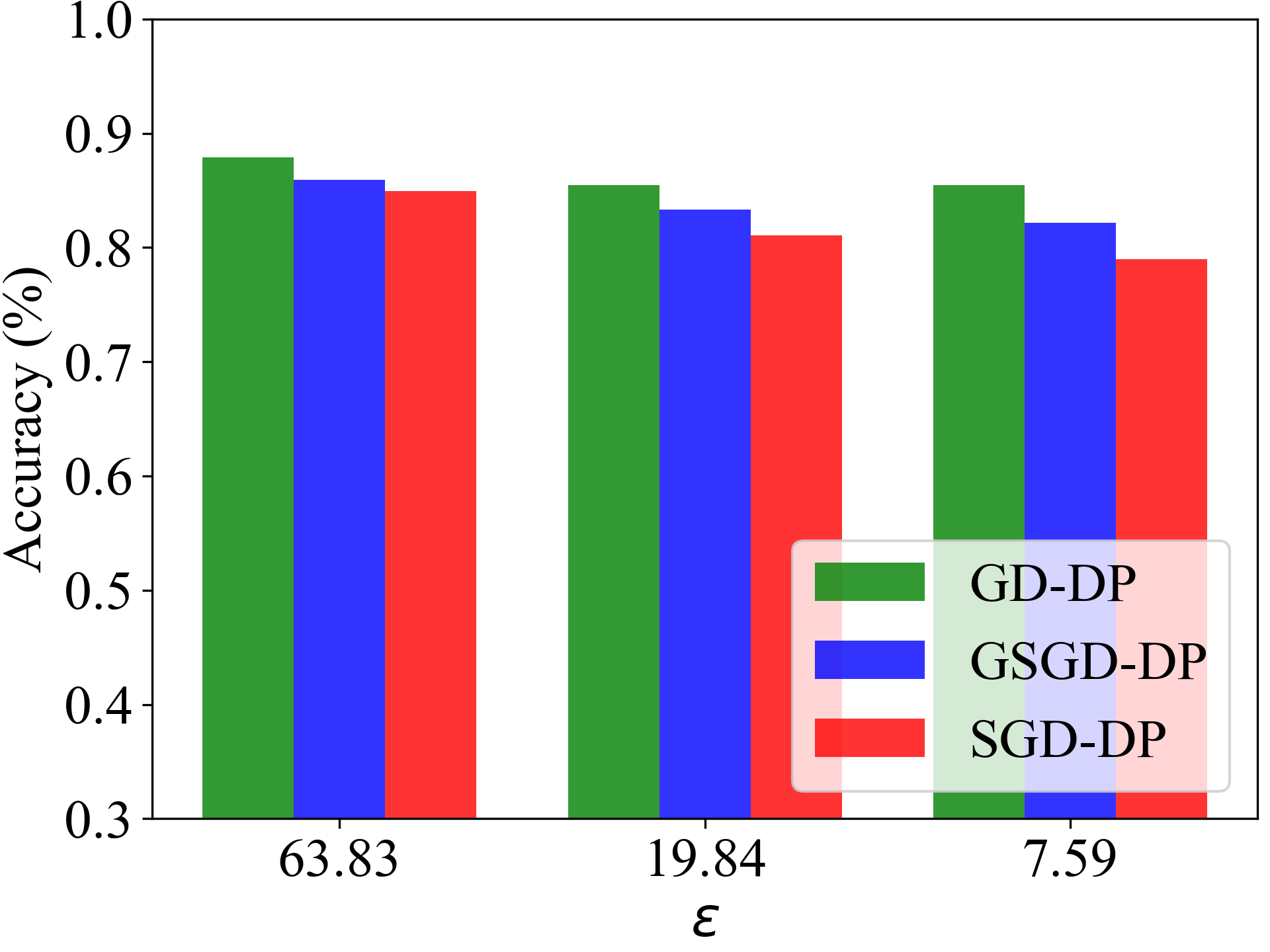}
    \caption{Linear model.}
    \label{fig:mnist_linear}
  \end{subfigure}
  \hfill
  \begin{subfigure}{0.48\linewidth}
    \includegraphics[width=0.9\textwidth]{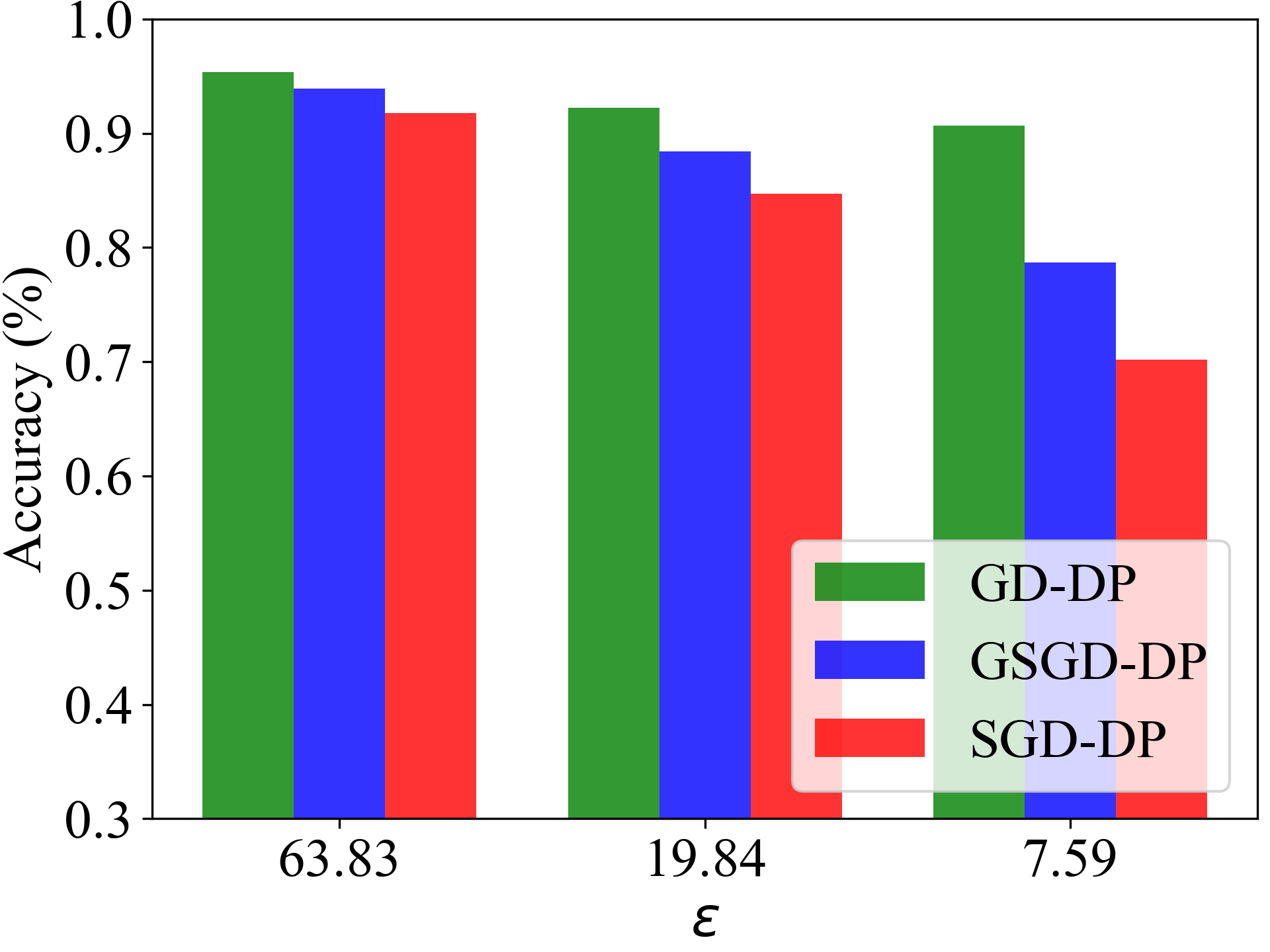}
    \caption{CNN model.}
    \label{fig:mnist_cnn}
  \end{subfigure}
  \caption{Comparison of different DP mechanisms on MNIST dataset. Note that we consider 50 users participating in FL. The training epoch is set as 100, the mini-batch size $B$ is 32, the clipped value $C$ is set as 10, and we consider $\delta=10^{-4}$. We report the accumulative privacy across all training epochs by using the composition theorem in \cite{kairouz2015composition}.}
  \label{fig:mnist}
\end{figure}

\section{Discussion and Future Work}
\label{sec:discussion}
\noindent\textbf{Generalization to other FL protocols.} While our theoretical results are derived based on FedSGD, they can be generalized to other FL algorithms such as FedAvg \cite{mcmahan2017communication}, by only varying how we generate the model update (i.e. Assumption \ref{assum1}). As an example, in FedAvg, at training round $t$ each user conducts multiple local training steps (i.e. a random process) to derive their local model update $x_{i}^{(t)}$, which will also be a random vector with some mean and covariance matrix. Therefore, after changing Assumption \ref{assum1} w.r.t. how we generate $x_{i}^{(t)}$, Theorem \ref{theorem1}-\ref{theorem2} can be applied.

\noindent\textbf{Leveraging inherent randomness in model updates.} As explored in Section \ref{sec:wf_noise}, when the inherent randomness in the aggregated model update can be leveraged, the amount of additional noise required for achieving DP guarantee can be reduced, since we only need to add additional noise to the subspace where the randomness in aggregated model updates cannot offer enough privacy protection (i.e. the `unprotected'' subspace). 
Despite its potential, several challenges exist in the design of a practical mechanism to leverage the inherent randomness in aggregated model updates to reduce the amount of noise needed for DP.

First, without making any assumptions about the distribution of each individual user's model update, we are unable to quantify and analyze the distribution of the inherent randomness in the aggregated model update. Hence, the closed-form of $\epsilon$ cannot be derived through theoretical analysis. A potential solution is to approximate the original aggregated model update whose distribution is unknown with a model update satisfying a common distribution. For example, as demonstrated in Section \ref{sec:theory}, Gaussian gradient sampling noise can be used to make the aggregated model update be Gaussian. While such gradient approximation makes it analytically feasible to derive the bound of $\epsilon$ in DP, it actually introduces additional noise into the raw aggregated model update (see Section \ref{sec:wf_noise} for details). Therefore, compared with traditional DP mechanisms that add noise directly to the original gradient, whether novel gradient approximation approaches can be developed to improve the privacy-utility trade-offs during DP noise addition deserves further research.

Second, the computational overhead of quantifying and leveraging the inherent randomness in the model update can be large. As demonstrated in Section \ref{sec:theory} and Section \ref{sec:wf_noise}, quantifying the ``unprotected'' subspace needs SVD, which will significantly increase the runtime overhead in practice. Hence, a practical direction of future work is to explore how to efficiently quantify and use the inherent randomness in the aggregated model update.  

Third, to ensure DP guarantee with meaningful $\epsilon$ for each individual user, users need to clip their gradients such that the distribution shift of the aggregated model update caused by the change of one data point locally can be bounded and hence a meaningful bound for $\epsilon$ can be guaranteed. On the other hand, if each user
clips their individual gradient, the aggregated model update will be bounded and hence cannot be used as unbounded noise with respect to other
users. In this case, when applying DP mechanisms with unbounded noise (e.g. Gaussian mechanism, Laplacian mechanism), the bounded inherent randomness in the aggregated model update may not help to reduce the variance of unbounded noise needed for DP guarantee. To address this, potential strategies include the adoption of DP mechanisms that utilize bounded noise distributions (e.g \cite{holohan2018bounded,dagan2022bounded,liu2018generalized}). Another approach involves discretizing the gradient, enabling the addition of bounded discrete noise into the model update \cite{kairouz2021distributed}.

\vspace{-0.05in}
\section{Related work}
\label{sec:related}





Secure aggregation (SA) has been introduced to address information leakage from model updates in the context of federated learning (FL). The  state-of-the-art uses additive masking to protect the privacy of individual models \cite{cc,secagg_bell2020secure,secagg_so2021securing,secagg_kadhe2020fastsecagg,zhao2021information,so2021lightsecagg,so2021turbo}. SecAg \cite{cc} was the first practical  protocol proposed for   FL with secure model aggregation that is resilient for both user failures or dropouts and  collusion between users as well as the server, and it is based on  pairwise secret keys to be generated between each pair of users for masking the model updates. 
The cost of constructing and sharing  these masks is quite large though, scaling with $O(N^2)$ where $N$ corresponds to the number of users.  Recent works have managed to reduce the complexity of SecAg to $O(N\log N)$, see SecAg+ \cite{secagg_bell2020secure} and TruboAggregate  \cite{so2021turbo}. SecAg+ leverages a sparse random graph  where each user jointly encodes its model update with only a subset of user,  while TruboAggregate leverages both  sequential training over groups of rings and lagrange coded computing \cite{yu2019lagrange}. 
Further overhead reduction has been recently achieved by LightSecAg \cite{so2021lightsecagg} which uses  private MDS codes. The reduction in complexity is based on using one-shot aggregate-mask reconstruction of the surviving users. Other secure aggregation protocols proposed to reduce the computation/communication  complexity of SecAg include \cite{secagg_kadhe2020fastsecagg,zhao2021information}.

A different direction to SA to provide privacy in the context of FL is to use differential privacy (DP) by adding some noise to either the data or the model updates. Differential privacy (DP) provides robust mathematical privacy guarantees by ensuring that the individual contribution of a client does not have a significant impact on the final learning result. In a trusted server setting, earlier works such as DP-SGD and DP-FedAvg~\cite{bassily2014private,abadi2016deep, mcmahan2017communication} have focused on applying DP centrally in FL, where the server is trusted with individual model updates before implementing the deferentially private mechanism.
An alternative approach, local differential privacy (LDP), explores similar guarantees where the server is not to be trusted. In LDP the updates are perturbed at the client-side before it is collected by the server in the clear. The LDP model provides stronger privacy guarantees as it does not require a fully trusted aggregator
but it suffers from a poor privacy-utility trade-off. To improve privacy-utility trade-off of LDP, prior works have proposed to use parameter shuffling (\cite{girgis2021shuffled,sun2020ldp}) and different FL protocols (\cite{noble2022differentially}). However, these approaches do not consider the usage of SA and hence lead to a significant utility drop \cite{kairouz2021distributed}.

Although FL systems may still leak information even when SA is utilized, there is an opportunity to apply distributed DP mechanisms which may distributed the added noise among the users thus improving utility, see, for example, recent efforts along these lines~\cite{agarwal2018cpsgd, kairouz2021distributed, agarwal2021skellam, chen2022poisson, yang2023privatefl}. Specifically, in~\cite{agarwal2018cpsgd, kairouz2021distributed, agarwal2021skellam}, differential privacy is achieved by adding novel discrete noise
that asymptotically converges to
a normal distribution. In~\cite{chen2022poisson}, the authors propose a Poisson Binomial Mechanism for distributed differential privacy that results in an unbiased aggregate update at the server. In~\cite{yang2023privatefl}, the authors design a novel transformer layer deployed locally for each user, which effectively reduces the accuracy drop caused by added noise in SA. 
However, these approaches do not take advantage of the inherent randomness in FL from updates of other participating clients in the system, which is the focus of this work. While there are a few works \cite{wang2015privacy,dong2022privacy,carlini2022no} which leverage the inherent randomness from data sampling in centralized learning to protect DP guarantee, they do not quantify to inherent randomness from model updates of other users in FL with SA.

Very recently, the authors in \cite{elkordy2022much} have used mutual information to show that the inherent randomness from updates from other users may offer some privacy protection in the context of FL with SA. In this work, we focus on whether FL with SA can provide differential privacy guarantees, which is a much stronger privacy guarantee.
\vspace{-0.05in}
\section{Conclusions}
In this paper, we formally analyze the conditions under which FL with SA can provide DP guarantees without additional noise. We then demonstrate that the impracticability of these conditions, especially for over-parameterized neural networks. Hence, additional noise is necessary to achieve DP guarantee in FL with SA in practice. Future work could explore  how to leverage the inherent randomness inside aggregated model update to reduce the amount of additional noise required for DP guarantee.


\bibliographystyle{ACM-Reference-Format}
\bibliography{reference}

\end{document}